\title{Branch Prediction Analysis of \MPLong and \KMPLong Algorithms} %
\titlerunning{Branch Prediction Analysis of \MPLong and \KMPLong Algorithms}
\author{Cyril Nicaud}{Univ Gustave Eiffel, CNRS, LIGM, F-77454 Marne-la-Vallée, France}{cyril.nicaud@univ-eiffel.fr}{}{}
\author{Carine Pivoteau}{Univ Gustave Eiffel, CNRS, LIGM, F-77454 Marne-la-Vallée, France}{carine.pivoteau@univ-eiffel.fr}{}{}
\author{Stéphane Vialette}{Univ Gustave Eiffel, CNRS, LIGM, F-77454 Marne-la-Vallée, France}{stephane.vialette@univ-eiffel.fr}{}{}
\authorrunning{C. Nicaud, C. Pivoteau and S. Vialette} 
\keywords{Pattern matching, branch prediction, transducers, average case complexity, Markov chains} 
\newcommand{\sNN}{\underline\nu}
\newcommand{\sN}{\nu}
\newcommand{\sT}{\tau}
\newcommand{\sTT}{\underline\tau}
\newcommand{\StronglyTaken}{\textsc{Strongly taken}\xspace}
\newcommand{\WeaklyTaken}{\textsc{Weakly taken}\xspace}
\newcommand{\StronglyNotTaken}{\textsc{Strongly not taken}\xspace}
\newcommand{\WeaklyNotTaken}{\textsc{Weakly not taken}\xspace}
\newcommand{\MP}{{MP}\xspace}
\newcommand{\KMP}{{KMP}\xspace}
\newcommand{\MPLong}{Morris-Pratt\xspace}
\newcommand{\KMPLong}{Knuth-Morris-Pratt\xspace}
\newcommand{\Text}{W}
\newcommand{\Pattern}{X}
\newcommand{\alphabet}{A}
\newcommand{\anyletter}{\alpha}
\newcommand{\A}{\mathcal{A}}
\newcommand{\T}{\mathcal{T}}
\newcommand{\fmp}{\textrm{mp}}
\newcommand{\MX}{{\mathcal{F}^\fmp_\Pattern}}
\newcommand{\MTX}{{\T^\fmp_\Pattern}}
\newcommand{\WMTX}{{\widetilde{\T}^\fmp_\Pattern}}
\newcommand{\PX}{{\mathcal{P}^\fmp_\Pattern}}
\newcommand{\fkmp}{\textrm{kmp}}
\newcommand{\KX}{{\mathcal{F}^\fkmp_\Pattern}}
\newcommand{\KTX}{{\T^\fkmp_\Pattern}}
\newcommand{\WKTX}{{\widetilde{\T}^\fkmp_\Pattern}}
\newcommand{\PKX}{{\mathcal{P}^\fkmp_\Pattern}}
\renewcommand{\O}{\mathcal{O}}
\newcommand{\border}[1]{\textrm{bord}(#1)}
\DeclareMathOperator{\pref}{Pref}
\DeclareMathOperator{\suff}{Suff}
\DeclareMathOperator{\OutputTransducer}{\nabla}
\begin{document}

\maketitle

\begin{abstract}
    We analyze the classical Morris-Pratt and Knuth-Morris-Pratt pattern matching algorithms through the lens of computer architecture, investigating the impact of incorporating a simple branch prediction mechanism into the model of computation. Assuming a fixed pattern and a random text, we derive precise estimates of the number of mispredictions these algorithms produce using local predictors. Our approach is based on automata theory and Markov chains, providing a foundation for the theoretical analysis of other text algorithms and more advanced branch prediction strategies.  
\end{abstract}

\newpage
\section{Introduction}\label{sec:introduction}

Pipelining is a fundamental technique employed in modern processors to improve performance by executing multiple instructions in parallel, rather than sequentially. Instructions execution is divided into distinct stages, enabling the simultaneous processing of multiple instructions, much like an assembly line in a factory. Today, pipelining is ubiquitous, even in low-cost processors priced under a dollar.
We refer the reader to the textbook reference~\cite{HePa17} for more details on this important architectural feature.

The sequential execution model assumes that each instruction completes before the next one begins; however, this assumption does not hold in a pipelined processor. Specific conditions, known as hazards, may prevent the next instruction in the pipeline from executing during its designated clock cycle. Hazards introduce delays that undermine the performance benefits of pipelining and may stall the pipeline, thus reducing the theoretical speedup:
\begin{itemize}
    \item \emph{Structural hazards} occur when resource conflicts arise, preventing the hardware from supporting all possible combinations of instructions during simultaneous executions.
    \item \emph{Control hazards} arise from the pipelining of jumps and other instructions that modify the order in which instructions are processed, by updating the Program Counter (PC).
    \item \emph{Data hazards} occur when an instruction depends on the result of a previous instruction, and this dependency is exposed by the overlap of instructions in the pipeline.
\end{itemize}
To minimize stalls due to control hazards and improve execution efficiency, modern computer architectures incorporate \emph{branch predictors}. A branch predictor is a digital circuit that anticipates the outcome of a branch (e.g., an \texttt{if–then–else} structure) before it is determined.

Two-way branching is typically implemented using a conditional jump instruction, which can either be \emph{taken}, updating the Program Counter to the target address specified by the jump instruction and redirecting  the execution path to a different location in memory, or \emph{not taken}, allowing execution to continue sequentially. The outcome of a conditional jump remains unknown until the condition is evaluated and the instruction reaches the actual execution stage in the pipeline. Without branch prediction, the processor would be forced to wait until the conditional jump instruction reaches this stage before the next instruction can enter the first stage in the pipeline. The branch predictor aims to reduce this delay by predicting whether the conditional jump is likely to be taken or not. The instruction corresponding to the predicted branch is then fetched and speculatively executed. If the prediction is later found to be incorrect (i.e., a \emph{misprediction} occurs), the speculatively executed or partially executed instructions are discarded, and the pipeline is flushed and restarted with the correct branch, incurring a small delay. 

The effectiveness of a branch prediction scheme depends on both its accuracy and the frequency of conditional branches. 
\emph{Static branch prediction} is the simplest technique, as it does not depend on the code execution history and, therefore, cannot adapt to program behavior.
In contrast, \emph{dynamic branch prediction} takes advantage of runtime information (specifically, branch execution history) to determine whether branches were taken or not, allowing it to make more informed predictions about future branches.

A vast body of research is dedicated to dynamic branch prediction schemes. At the highest level, branch predictors are classified into two categories: global and local.
A \emph{global branch predictor} does not maintain separate history records for individual conditional jumps. Instead, it relies on a shared history of all jumps, allowing it to capture their correlations and improve prediction accuracy.
In contrast, a \emph{local branch predictor} maintains an independent history buffer for each conditional jump, enabling predictions based solely on the behavior of that specific branch.
Since the 2000s, modern processors typically employ a combination of local and global branch prediction techniques, often incorporating even more sophisticated designs. For a deeper exploration of this topic, see~\cite{Mittal2018}, and for a comprehensive discussion of modern computer architecture, refer to~\cite{HePa17}.

In this study, we focus on local branch predictors implemented with \emph{saturated counters}. A \emph{1-bit saturating counter} (essentially a flip-flop) records the most recent branch outcome. Although this is the simplest form of dynamic branch prediction, it offers limited accuracy. A \emph{2-bit saturating counter} (see \cref{fig:predictor}), by contrast, operates as a state machine with four possible states: \StronglyNotTaken ($\sNN$), \WeaklyNotTaken ($\sN$), \WeaklyTaken ($\sT$), and \StronglyTaken ($\sTT$). When the 2-bit saturated branch predictor is in the \StronglyNotTaken or \WeaklyNotTaken state, it predicts that the branch will not be taken and  execution will proceed sequentially. Conversely, when the predictor is in the \StronglyTaken or \WeaklyTaken state, it predicts that the branch will be taken, meaning execution will jump to the target address. 
Each time a branch is evaluated, the corresponding state machine updates its state. If the branch is not taken, the state shifts toward \StronglyNotTaken; if taken, it moves toward \StronglyTaken.  
A misprediction (corresponding to a bold edge in  \cref{fig:predictor}) occurs when:  
\begin{itemize}
    \item a branch is not taken, while the predictor is in either of the \textsc{Taken} states ($\sT$ or $\sTT$); 
    \item a branch is taken, while the predictor is in either of the \textsc{Not Taken} states ($\sN$ or $\sNN$).
\end{itemize}  
This mechanism gives the 2-bit saturating counter an advantage over the simpler 1-bit scheme: a branch must deviate twice from its usual behavior (i.e. a \textsc{Strongly} state) before the prediction changes, reducing the likelihood of mispredictions.

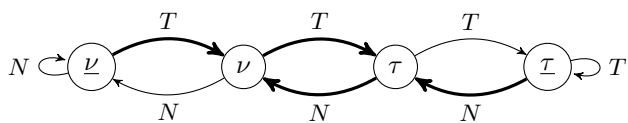
\begin{figure}[t]
    \centering
    \begin{tikzpicture}
        \node[draw,circle] (NN) at (0,0) {$\sNN$};
        \node[draw,circle] (N) at (2,0) {$\sN$};
        \node[draw,circle] (T) at (4,0) {$\sT$};
        \node[draw,circle] (TT) at (6,0) {$\sTT$};

        \draw (NN) edge[> = stealth',loop left] node {\small $N$} (NN);
        \draw (N) edge[> = stealth',->,bend left] node[below] {\small $N$} (NN);
        \draw (T) edge[> = stealth',->,bend left, very thick] node[below] {\small $N$} (N);
        \draw (TT) edge[> = stealth',->,bend left, very thick] node[below] {\small $N$} (T);
       
        \draw (NN) edge[> = stealth',->,bend left, very thick] node[above] {\small $T$} (N);
        \draw (N) edge[> = stealth',->,bend left, very thick] node[above] {\small $T$} (T);
        \draw (T) edge[> = stealth',->,bend left] node[above] {\small $T$} (TT);
        \draw (TT) edge[> = stealth',loop right] node {\small $T$} (TT);

    \end{tikzpicture}
    \caption{\label{fig:predictor}%
        The 2-bit saturated predictor consists of four states: $\sNN$ and $\sN$ predict that the branch will not be taken, while $\sT$ and $\sTT$ predict that it will. The predictor updates at each condition evaluation, transitioning via $T$ when the branch is taken (i.e., the condition is true) and via $N$ when it is not. Bold edges indicate mispredictions.
    }
\end{figure}

This paper provides an initial theoretical exploration of the techniques and results involved in analyzing pattern matching algorithms within computational models enhanced by branch prediction mechanisms. Specifically, we examine the classical Morris-Pratt~(\MP) and Knuth-Morris-Pratt~(\KMP) algorithms~\cite{MP,knuth1977fast} in such models, with a primary focus on quantifying mispredictions for random text inputs. Over the past two decades, research in this area has progressed from experimental studies on sorting algorithms and fundamental data structures~\cite{BrMo05} to theoretical analyses of misprediction behavior in \textsc{Java}'s double-pivot Quicksort~\cite{MaNeWi15} and the development of skewed variants of classical algorithms (e.g., binary search, exponentiation by squaring) designed to optimize branch prediction~\cite{AuNiPi2016}. These studies have primarily examined local predictors, particularly 2-bit saturating ones (see \cref{fig:predictor}), which also serve as the foundation of this work.
 
\section{Algorithms and their encoding using transducers}\label{sec:algorithms} 

Throughout the article, indices start at $0$: if $u$ is a word of length $|u|=n$ over the alphabet~$\alphabet$, we represent it as $u=u_0 \dots u_{n-1}$, where each $u_i$ is a letter of $u$. 
We also use $u[i]$ to denote the letter $u_i$. 
If $u=xyz$ where $x$, $y$ and $z$ are words, then $x$ is a \emph{prefix} of $u$, $y$ is a \emph{factor} and $z$ is a \emph{suffix}. A prefix (resp. suffix) of $u$ is \emph{strict} if it differs from $u$. 
For any $i \in \{0, \dots, n\}$, we denote by  $\pref(u,i)$  the prefix  of $u$ of length $i$. 
A \emph{strict border} of $u$ is a word $v$ that is both a strict prefix and a strict suffix of $u$.

\vspace{-.3cm}
\paragraph*{Algorithms \MP and \KMP}\label{sec:MP KMP}
The Morris-Pratt (\MP) and Knuth-Morris-Pratt (\KMP) algorithms are textbook solutions to the pattern matching problem~\cite{DBLP:books/ox/CrochemoreR94, DBLP:books/cu/Gusfield1997, DBLP:books/daglib/0020103}. 
Both rely on precomputing a \emph{failure function}, which helps identify candidate positions for the pattern $\Pattern$ in the text $\Text$. 
The general approach involves scanning $\Text$ from left to right, one letter at a time. 
Before moving to the next letter in~$\Text$, the algorithm determines the longest prefix of $\Pattern$ that is also a suffix of the discovered prefix of $\Text$.
The failure function allows this computation to be performed efficiently.

The function $\fmp_\Pattern$  maps each prefix of $\Pattern$ to its longest border, with the convention that $\fmp_\Pattern(\varepsilon)=\bot$. 
The function $\fkmp_\Pattern$ is a refinement of $\fmp_\Pattern$ defined by 
$\fkmp_\Pattern(\Pattern) = \fmp_\Pattern(\Pattern)$, $\fkmp_\Pattern(\varepsilon)=\bot$, and for all prefix $u \anyletter$ of $\Pattern$, where $u\in\alphabet^*$ and $\anyletter\in\alphabet$, $\fkmp_\Pattern(u)$ is the longest strict suffix of $u$ that is also a prefix of $u$ but such that $\fkmp_\Pattern(u)\alpha$ is not.
If  no such strict suffix exists, then $\fkmp_\Pattern(u)=\bot$. 
See~\cite{DBLP:books/cu/Gusfield1997} for a more detailed discussion of these failure functions. 
In the following, we only require that Algorithm~\ref{algo:find} remains correct regardless of which failure function is used.
Within the algorithm, the function $b:=\fmp_\Pattern$ (or $b:=\fkmp_\Pattern$) is transformed into a precomputed integer-valued array $B$, defined as  
$B[i] = |b(\pref(\Pattern,i))|$ for $i \in\{0,\dots,|\Pattern|\}$, with the convention that $|\bot| = -1$.

\SetProcNameSty{textsc}
\SetProcArgSty{textsc}

\begin{wrapfigure}{l}{0.48\textwidth}
    \begin{minipage}{.48\textwidth}
        \small
        \begin{algorithm}[H]
            \DontPrintSemicolon
            $m,n\gets |\Pattern|, |\Text|$\;
            $i,j,nb\gets 0,0,0$\;
            \While{$j<n$\label{line:find-main-while}}
            {
               \While{$i\geq 0$ \bf{and} $\Pattern[i]\neq \Text[j]$ \label{line:find-nested-while}}
                {
                $i\gets B[i]$\;
                }
                $i,j\gets i+1,j+1$\;
                \If{$i=m$}{\label{line:find-counter-if}
                $i\gets B[i]$\;
                $nb \gets nb+1$\;
                }
            }
            \Return $nb$
            \SetAlgoRefName{FIND}
            \caption{\label{algo:find}%
                \!($\Pattern$, $\Text$, $B$
            )}
        \end{algorithm}
    \end{minipage}
\end{wrapfigure}

Algorithm~\ref{algo:find}, shown on the left, utilizes the precomputed table $B$ from either $\fmp_\Pattern$ or $\fkmp_\Pattern$ to efficiently locate potential occurrences of $\Pattern $ in $\Text$.  
The indices $i$ and $j$ represent the current positions in $\Pattern$ and $\Text$, respectively.
The main {\tt while} loop (Line~\ref{line:find-main-while}) iterates once for each letter discovered in $\Text$. At the start of each iteration, index $i$ holds the length of the longest matching prefix of $\Pattern$. The inner {\tt while} loop (Line~\ref{line:find-nested-while}) updates $i$ using the precomputed table $B$. Finally, the {\tt if} statement (Line~\ref{line:find-counter-if}) is triggered when an occurrence of $\Pattern$ is found, updating $i$ accordingly.  
For both \MP and \KMP, the table $B$ can be computed in~$\O(m)$ time and
\ref{algo:find} runs in $\O(n)$ time. 
More precisely, in the worst case, Algorithm~\ref{algo:find} performs at most $2n-m$ letter comparisons~\cite{DBLP:books/ox/CrochemoreR94}.

Note that in any programming language supporting short-circuit evaluation of Boolean operators, the condition $i \geq 0$ {\tt and} $\Pattern[i] \neq \Text[j]$ at Line~\ref{line:find-nested-while} of Algorithm~\ref{algo:find} is evaluated as two separate jumps by the compiler.  
As a result, Algorithm~\ref{algo:find} contains a total of four branches: one at Line~\ref{line:find-main-while}, two at Line~\ref{line:find-nested-while}, and one at Line~\ref{line:find-counter-if}. In our model, each of these four branches is assigned a local predictor, and all may potentially lead to mispredictions. 
When a conditional instruction is compiled, it results in a jump that can correspond to either a taken or a not-taken branch.\footnote{Most assembly languages provide both {\tt jump-if-true} and {\tt jump-if-false} instructions.} For consistency, we define a successful condition --~when the test evaluates to true~-- as always leading to a taken branch. This convention does not affect our analysis, as the predictors we consider are symmetric.

\vspace{-.3cm}\paragraph*{Associated automata}\label{sec:transducers}

At the beginning of each iteration of the main \texttt{while} loop of Algorithm \textsc{Find}, the prefix of length $j$ of $\Text$ (i.e. from letter $\Text[0]$ to letter $\Text[j-1]$) has been discovered, and $i$ is the length of the longest suffix of $\pref(\Text,j)$ that is a strict prefix of $\Pattern$: we cannot have $i=m$, because when it happens, the pattern is found and $i$ is directly updated to $B[i]$, Line~\ref{line:find-counter-if}.

The evolution of $i$ at each iteration of the main {\tt while} loop is encoded by a deterministic and complete automaton $\A_\Pattern$. Its set of states is the set $Q_\Pattern$
of strict prefixes of $\Pattern$, identified by their unique lengths if necessary. Its transition function $\delta_\Pattern$ maps a pair $(u,\anyletter)$ to the longest suffix of $u\anyletter$ which is in $Q_\Pattern$. Its initial state is $\varepsilon$. If $\Pattern=Y\anyletter$, where $\anyletter\in\alphabet$ is a letter, then when following the path labeled by $\Text$ starting from the initial state of $\A_\Pattern$, there is an occurrence of $\Pattern$ in $\Text$ exactly when the transition $Y\xrightarrow{\anyletter}\delta_\Pattern(Y,\alpha)$ is used. This variant of the classical construction is more relevant for this article than the usual one~\cite[Sec. 7.1]{DBLP:books/ox/CrochemoreR94} which also has the state $\Pattern$. The transition $Y\xrightarrow{\anyletter}\delta_\Pattern(Y,\alpha)$ is the accepting transition to identify occurrences of $\Pattern$. The automaton $\A_\Pattern$ tracks the value of $i$ at the beginning of each iteration of the main loop, where $i$ corresponds to the length of the current state label. This value remains the same for both \MP and \KMP. An example of $\A_\Pattern$ is depicted in \cref{fig:dfa}.

\begin{figure}[t]
	\centering
    \small
	\begin{tikzpicture}[scale=1]
		\node[draw,thick] (0) at (0,0) {$\varepsilon$};
		\node[draw,thick] (1) at (2,0) {$a$};
		\node[draw,thick] (2) at (4,0) {$ab$};
		\node[draw,thick] (3) at (6,0) {$aba$};
		\node[draw,thick] (4) at (8,0) {$abab$};
		
		\draw[->] (-.4,0) -- (0);
		
		\draw (0) edge[> = stealth',->, shorten >= .1em] node[above]{$a$} (1);
		\draw (0) edge[> = stealth',loop above] node[above,left, xshift=-1mm]{$b$} (0);
		\draw (1) edge[> = stealth',->, shorten >= .1em] node[above]{$b$} (2);
		\draw (1) edge[> = stealth',loop above] node[above,left, xshift=-1mm]{$a$} (1);
		\draw (2) edge[> = stealth', ->, shorten >= .1em] node[above,yshift=0.2mm]{$a$} (3);
		\draw (2) edge[> = stealth', ->, bend left, shorten >= .1em] node[above]{$b$} (0);
		\draw (3) edge[> = stealth', ->, bend right, shorten >= .1em] node[above,yshift=0.2mm]{$a$} (1);
		\draw (3) edge[> = stealth', ->, shorten >= .1em] node[below]{$b$} (4);
	    \draw (4) edge[> = stealth', ->, bend right, shorten >= .1em] node[above,yshift=0.2mm]{$a$} (3);
		\draw (4) edge[> = stealth', ->, bend left, shorten >= .1em, very thick, green!50!black] node[above]{$b$} (0);
		
	\end{tikzpicture}
	\caption{The deterministic and complete automaton $\A_\Pattern$ for $\Pattern=ababb$.\label{fig:dfa}}
\end{figure}

To refine the simulation of the \textsc{Find} algorithm using automata, we incorporate the failure functions. This is achieved by constructing the \emph{failure automaton}. Specifically, for Algorithm~\MP, let $\MX$ be the automaton defined by:  
\begin{itemize}
    \item A state set $Q_\Pattern \cup \{\bot\}$ and an initial state $\varepsilon$.  
    \item Transitions $\bot \xrightarrow{\anyletter} \varepsilon$ for every $\anyletter \in \alphabet$.  
    \item Transitions $u \xrightarrow{\anyletter} u\anyletter$ for every $u \in Q_\Pattern$ such that $u\anyletter \in Q_\Pattern$.  
    \item A failure transition $u \rightarrow \fmp_\Pattern(u)$ for every $u \in Q_\Pattern$, used when attempting to read a letter $\anyletter$ where $u\anyletter \notin Q_\Pattern$.  
\end{itemize}  
The automaton $\KX$ associated with \KMP is identical to $\MX$, except that its failure transitions are defined as $u \rightarrow \fkmp_\Pattern(u)$ for every $u \in Q_\Pattern$. Both automata serve as graphical representations of the failure functions $\fmp_\Pattern$ and $\fkmp_\Pattern$, structured in a way that aligns with~$\A_\Pattern$. An example of $\MX$ and $\KX$ is illustrated in \cref{fig:dfa failure}.  

\begin{figure}[t]
	\centering
    \small
	\begin{tikzpicture}[scale=1]
        \node[draw,thick] (-1) at (-2,0) {\small $\bot$};
        \node[draw,thick] (0) at (0,0) {$\varepsilon$};
        \node[draw,thick] (1) at (2,0) {$a$};
        \node[draw,thick] (2) at (4,0) {$ab$};
        \node[draw,thick] (3) at (6,0) {$aba$};
        \node[draw,thick] (4) at (8,0) {$abab$};
        
        \draw[->] (0,-.4) -- (0);
        \draw (-1) edge[> = stealth',->, shorten >= .1em, purple] node[fill=white]{$a,b$} (0);
        
        \draw (0) edge[> = stealth',->, shorten >= .1em] node[above]{$a$} (1);
        \draw (1) edge[> = stealth',->, shorten >= .1em] node[above]{$b$} (2);
        \draw (2) edge[> = stealth', ->, shorten >= .1em] node[above,yshift=0.2mm]{$a$} (3);
        \draw (3) edge[> = stealth', ->, shorten >= .1em] node[above]{$b$} (4);
        
        \draw (0) edge[> = stealth',->, shorten >= .1em, bend right=40, red, dotted, thick] (-1);
        \draw (1) edge[> = stealth',->, shorten >= .1em, bend right=40, red, dotted, thick] (0);
        \draw (2) edge[> = stealth',->, shorten >= .1em, bend right=40, red, dotted, thick] (0);
        \draw (3) edge[> = stealth',->, shorten >= .1em, bend right, red, dotted, thick] (1);
        \draw (4) edge[> = stealth',->, shorten >= .1em, bend right, red, dotted, thick] (2);
	    
        \draw (0) edge[> = stealth',->, shorten >= .1em, bend left=40, blue, dashed] (-1);
        \draw (1) edge[> = stealth',->, shorten >= .1em, bend left=40, blue, dashed] (0);
        \draw (2) edge[> = stealth',->, shorten >= .1em, bend left=40, blue, dashed] (-1);
        \draw (3) edge[> = stealth',->, shorten >= .1em, bend left=40, blue, dashed] (0);
        \draw (4) edge[> = stealth',->, shorten >= .1em, bend left, blue, dashed] (2);
        
	\end{tikzpicture}
	\caption{The automata $\MX$ and $\KX$ for $\Pattern=ababb$, on the same picture; the failure transitions of $\MX$ are in dotted red lines and above, those of $\KX$ are in dashed blue lines and below. To read the letter $a$ from state $aba$ in $\MX$, one follows the failure transition $aba\rightarrow a$ then $a\rightarrow\varepsilon$ until one can finally read  $\varepsilon\xrightarrow{a}a$. In $\KX$, only one failure transition $aba\rightarrow\varepsilon$ is needed, instead of two.\label{fig:dfa failure}}
\end{figure}

When reading a letter $\anyletter$ from a state $u$ in $\MX$ or $\KX$, if the transition $u \xrightarrow{\anyletter} u\anyletter$ does not exist, the automaton follows failure transitions until a state with an outgoing transition labeled by $\anyletter$ is found. This process corresponds to a single backward transition in $\A_\Pattern$.  
Crucially, using a failure transition directly mirrors the execution of the nested \texttt{while} loop in Algorithm \textsc{Find} (Line~\ref{line:find-nested-while}) or triggers the \texttt{if} statement at Line~\ref{line:find-counter-if} when an occurrence of $\Pattern$ is found. This construction captures what we need for the forthcoming analysis.

\section{Expected number of letter comparisons for a given pattern}\label{sec:comparisons}  

In this section, we refine the analysis of the expected number of letter comparisons performed by Algorithm~\ref{algo:find} for a given pattern $\Pattern$ of length $m$ and a random text $\Text$ of length $n$. The average-case complexity of classical pattern matching algorithms has been  explored before, particularly in scenarios where both the pattern and the text are randomly generated. Early studies~\cite{Regnier89,regnier1998complexity} examined the expected number of comparisons in Algorithms \MP and \KMP under memoryless or Markovian source models, employing techniques from analytic combinatorics. Prior attempts based on Markov chains introduced substantial approximations, limiting their accuracy compared to more refined combinatorial methods~\cite{Regnier89,regnier1998complexity}.  Here, we refine and extend this Markov chain-based methodology, providing a more precise foundation for analyzing the expected number of mispredictions (see \cref{sec:mispredictions}). 

Let $\pi$ be a probability measure on $\alphabet$ such that for all $\anyletter\in\alphabet$, $0<\pi(\anyletter) < 1$ (we also use the notation $\pi_\anyletter:=\pi(\anyletter)$ in formulas when convenient). For each $n\geq 0$ and each $\Text\in\alphabet^n$, we define 
$\pi_n(\Text) := \prod_{i=0}^{n-1}\pi(\Text_i)$.
For any $n$, the measure $\pi_n$ is a probability on $\alphabet^n$, where all letters are chosen independently following $\pi$. We obtain the uniform distribution on $\alphabet^n$ if~$\pi$ is the uniform distribution on $\alphabet$, with $\pi(\anyletter)=\frac1{|\alphabet|}$ for all $\anyletter\in\alphabet$.

\vspace{-.3cm}\paragraph*{Encoding the letter comparisons with transducers}
Letter comparisons occur at Line~\ref{line:find-nested-while} only if $i \geq 0$, due to the lazy evaluation of the \texttt{and} operator (when $i < 0$, $\Text[j]$ is not compared to $\Pattern[i]$).  
We can encode these comparisons within the automata of \cref{sec:algorithms} by adding outputs to the transitions, thereby transforming them into transducers. In both $\MX$ and $\KX$, each transition $u \xrightarrow{\anyletter} u\anyletter$ corresponds to matching letters, meaning the test $\Pattern[i] \neq \Text[j]$ evaluates to false. We denote this with the letter $N$ for a \emph{not taken} branch. Conversely, following a failure transition indicates that the test $\Pattern[i] \neq \Text[j]$ is true, which we denote by $T$ for \emph{taken}.  
Transitions from $\bot$ correspond to cases where $i \geq 0$ is false, meaning no letter comparisons occur, as noted above. This construction is illustrated in \cref{fig:transducer failure}.

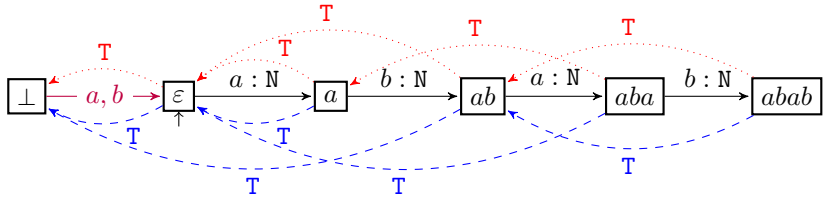
\begin{figure}[t]
    \centering
	\begin{tikzpicture}[scale=1]
        \node[draw,thick] (-1) at (-2,0) {\small $\bot$};
        \node[draw,thick] (0) at (0,0) {$\varepsilon$};
        \node[draw,thick] (1) at (2,0) {$a$};
        \node[draw,thick] (2) at (4,0) {$ab$};
        \node[draw,thick] (3) at (6,0) {$aba$};
        \node[draw,thick] (4) at (8,0) {$abab$};
		
		\draw[->] (0,-.4) -- (0);
        \draw (-1) edge[> = stealth',->, shorten >= .1em, purple] node[fill=white]{$a,b$} (0);
		
		\draw (0) edge[> = stealth',->, shorten >= .1em] node[above]{$a:\tt N$} (1);
		\draw (1) edge[> = stealth',->, shorten >= .1em] node[above]{$b:\tt N$} (2);
		\draw (2) edge[> = stealth', ->, shorten >= .1em] node[above,yshift=0.2mm]{$a:\tt N$} (3);
		\draw (3) edge[> = stealth', ->, shorten >= .1em] node[above]{$b:\tt N$} (4);
        
        \draw (0) edge[> = stealth',->, shorten >= .1em, bend right, red, dotted] node[above]{\tt T} (-1);
        \draw (1) edge[> = stealth',->, shorten >= .1em, bend right=40, red, dotted] node[above, near start]{\tt T}(0);
        \draw (2) edge[> = stealth',->, shorten >= .1em, bend right=40, red, dotted] node[above]{\tt T} (0);
        \draw (3) edge[> = stealth',->, shorten >= .1em, bend right, red, dotted] node[above]{\tt T} (1);
        \draw (4) edge[> = stealth',->, shorten >= .1em, bend right, red, dotted] node[above]{\tt T} (2);
	    
        \draw (0) edge[> = stealth',->, shorten >= .1em, bend left, blue, dashed] node[below, near start]{\tt T} (-1);
        \draw (1) edge[> = stealth',->, shorten >= .1em, bend left, blue, dashed] node[below, near start]{\tt T} (0);
        \draw (2) edge[> = stealth',->, shorten >= .1em, bend left, blue, dashed] node[below]{\tt T }(-1);
        \draw (3) edge[> = stealth',->, shorten >= .1em, bend left, blue, dashed] node[below]{\tt T} (0);
        \draw (4) edge[> = stealth',->, shorten >= .1em, bend left, blue, dashed] node[below]{\tt T} (2);

	\end{tikzpicture}
    \caption{The automata $\MX$ and $\KX$ transformed into transducers by adding the result of letter comparisons in \textsc{Find} as output of each transition.\label{fig:transducer failure}}
\end{figure}

We keep track of the results of the comparisons $\Pattern[i]\neq \Text[j]$ in $\A_\Pattern$ by simulating the reading of each letter in the transducer associated with $\MX$ and concatenating the outputs. This transforms $\A_\Pattern$ into the transducer $\MTX$ for \MP, by adding an output function $\OutputTransducer_\MTX$ to $\A_\Pattern$ as follows (see \cref{fig:transducers} for an example).
\begin{equation}\label{eq:output}
    \OutputTransducer_\MTX\left(u\xrightarrow{\anyletter}\right) = 
        \begin{cases}
            N & \text{if $u\anyletter\in Q_\Pattern$ or $u\anyletter = \Pattern$},\\
            T & \text{if }u\anyletter\notin Q_\Pattern \text{ and }\fmp(u)= \bot,\\
            T\cdot\OutputTransducer_\MTX\left(\fmp(u)\xrightarrow{\anyletter}\right) & \text{otherwise.}
        \end{cases}
\end{equation}
Instead of $\OutputTransducer_\KTX$ we can use the output $\OutputTransducer_\KTX$ defined as $\OutputTransducer_\MTX$ except that $\fmp$ is changed into $\fkmp$. This yields the transducer $\KTX$ associated with \KMP.

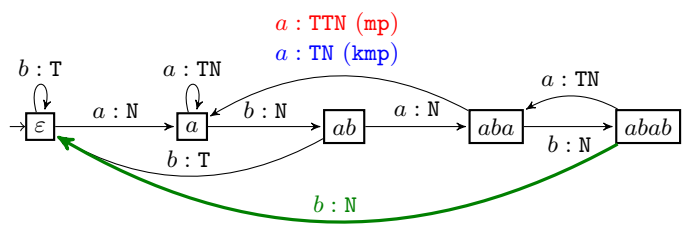
\begin{figure}[t]
	\centering
    \small
	\begin{tikzpicture}[scale=1]
		\node[draw,thick] (0) at (0,0) {$\varepsilon$};
		\node[draw,thick] (1) at (2,0) {$a$};
		\node[draw,thick] (2) at (4,0) {$ab$};
		\node[draw,thick] (3) at (6,0) {$aba$};
		\node[draw,thick] (4) at (8,0) {$abab$};
		
		\draw[->] (-.4,0) -- (0);
		
		\draw (0) edge[> = stealth',->, shorten >= .1em] node[above]{$a:\tt N$} (1);
		\draw (0) edge[> = stealth',loop above] node[above]{$b:\tt T$} (0);
		\draw (1) edge[> = stealth',->, shorten >= .1em] node[above]{$b:\tt N$} (2);
		\draw (1) edge[> = stealth',loop above] node[above]{$a:\tt TN$} (1);
		\draw (2) edge[> = stealth', ->, shorten >= .1em] node[above,yshift=0.2mm]{$a:\tt N$} (3);
		\draw (2) edge[> = stealth', ->, bend left, shorten >= .1em] node[above]{$b:\tt T$} (0);
		\draw (3) edge[> = stealth', ->, bend right, shorten >= .1em] node[above,yshift=0.2mm,align=center]{$\color{red}a:\tt TTN\ (mp)$\\\color{blue}$a:\tt TN\ (kmp)$} (1);
		\draw (3) edge[> = stealth', ->, shorten >= .1em] node[below]{$b:\tt N$} (4);
	    \draw (4) edge[> = stealth', ->, bend right, shorten >= .1em] node[above,yshift=0.2mm]{$a:\tt TN$} (3);
		\draw (4) edge[> = stealth', ->, bend left, shorten >= .1em, very thick, green!50!black] node[above]{$b:\tt N$} (0);
		
	\end{tikzpicture}
    \caption{The transducers $\MTX$ and $\KTX$ for $\Pattern=ababb$. The only difference between them 
    lies in the transition $aba\xrightarrow{a}a$, for which Algorithm \MP uses one more letter comparison.
    \label{fig:transducers}}
\end{figure}

Recall that the output of a path in a transducer is the concatenation of the outputs of its transitions. As the transducers $\MTX$ and $\KTX$ are (input-)deterministic and complete, the output of a word is the output of its unique path that starts at the initial state. From the classical link between $\A_\Pattern$ and Algorithm~\ref{algo:find}~\cite{DBLP:books/ox/CrochemoreR94} we have the following key statement.

\begin{lemma}\label{lm:transducer comparisons}
The sequence of results of the comparisons $\Pattern[i]\neq \Text[j]$ when applying
Algorithm \textsc{Find} to the pattern $\Pattern$ and text $\Text$ is equal to the output of the word $\Text$ in the transducer~$\MTX$ for Algorithm MP, and in the transducer $\KTX$ for \KMP . 
\end{lemma}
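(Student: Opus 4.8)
The plan is to show that the transducer faithfully serializes, letter by letter, the comparisons performed by \ref{algo:find}, relying on the classical correspondence already invoked in the setup: the underlying automaton $\A_\Pattern$ tracks the value of $i$ at the beginning of each iteration of the main \texttt{while} loop (Line~\ref{line:find-main-while}). Concretely, after the prefix $\pref(\Text,j)$ has been read, the state of $\A_\Pattern$ — equivalently of $\MTX$, which shares the same transition structure — is the strict prefix of $\Pattern$ whose length equals the value of $i$ at the start of the $j$-th iteration. Since the output of a word in a deterministic and complete transducer is the concatenation of the outputs of its single transitions, it suffices to establish the statement one letter at a time and then concatenate over $j=0,\dots,n-1$.

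First I would fix the current state $u\in Q_\Pattern$ and the letter $\anyletter=\Text[j]$ being read, so that $i=|u|$, and prove by induction on $|u|$ — a well-founded induction, since $|\fmp(u)|<|u|$ and hence the failure chain $u,\fmp(u),\fmp(\fmp(u)),\dots$ strictly decreases and terminates — that $\OutputTransducer_\MTX(u\xrightarrow{\anyletter})$ equals exactly the sequence of results of the comparisons $\Pattern[i']\neq\Text[j]$ executed by the inner \texttt{while} loop (Line~\ref{line:find-nested-while}) while processing $\Text[j]$, up to and including the comparison that causes the loop to exit. The induction mirrors the three branches of \cref{eq:output}: in the matching case $u\anyletter\in Q_\Pattern$ or $u\anyletter=\Pattern$, the test $\Pattern[i]\neq\Text[j]$ is false, the inner loop exits at once, and the output is the single letter $N$; if instead $u\anyletter\notin Q_\Pattern$ the test is true, contributing one $T$ and one update $i\gets B[i]=|\fmp(u)|$, after which either $\fmp(u)=\bot$ (so $i$ becomes $-1$, the guard $i\geq 0$ fails, the loop exits, and the output is the single $T$) or $\fmp(u)\neq\bot$ (so the loop re-enters reading the \emph{same} letter $\Text[j]$ from state $\fmp(u)$, which is precisely the recursive term $\OutputTransducer_\MTX(\fmp(u)\xrightarrow{\anyletter})$ appended after the leading $T$, covered by the induction hypothesis).

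Concatenating these per-letter equalities over all $j$ then yields the claim for \MP. Two boundary points must be checked to close the argument. First, transitions out of $\bot$ carry no output and correspond to the guard $i\geq 0$ failing, so no comparison is lost or spuriously introduced there. Second, in the accepting case $u\anyletter=\Pattern$ the match contributes the single output $N$, while the state update $i\gets B[m]$ performed by the \texttt{if} block (Line~\ref{line:find-counter-if}) is already folded into the transition $Y\xrightarrow{\anyletter}\delta_\Pattern(Y,\anyletter)$ of $\A_\Pattern$; crucially this block performs no letter comparison, so it adds no output. The argument for \KMP is verbatim, with $\fmp$ replaced by $\fkmp$ and $\MTX$ by $\KTX$, which is legitimate because \ref{algo:find} is correct for either failure function.

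The hard part — indeed the only genuinely delicate point — is the many-to-one correspondence between the single labeled transition of $\MTX$ reading $\anyletter$ from $u$ and the possibly several iterations of the inner \texttt{while} loop: the compound output attached to that one transition must coincide with the entire burst of comparisons triggered by $\Text[j]$. The induction above resolves this, but it hinges on verifying that the recursion in \cref{eq:output} halts at precisely the same moment the inner loop exits — on a successful match (emitting a terminal $N$) or on reaching $\bot$ (emitting a terminal $T$) — and that each intermediate failure step contributes exactly one $T$ together with one update $i\gets B[i]$, with the reading of $\anyletter$ from $\fmp(u)$ matching the next inner-loop iteration.
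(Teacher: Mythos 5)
Your proof is correct and takes essentially the same route the paper intends: the paper leaves this lemma to the "classical link" between $\A_\Pattern$ and Algorithm \textsc{Find} together with the construction of the outputs in \cref{eq:output}, and your induction along the failure chain $u,\fmp(u),\dots$ plus concatenation over the letters of $\Text$ is exactly the verification that construction is designed to make routine. The two boundary checks you single out (the $\bot$ case emitting only the terminal $T$ because the guard $i\geq 0$ fails without a letter comparison, and the accepting case $u\anyletter=\Pattern$ where the \texttt{if} block performs no comparison) are precisely the points the paper's informal discussion glosses over, and you handle them correctly.
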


\vspace{-.3cm}\paragraph*{State probability and expected number of comparisons}

Since we can use exactly the same techniques, from now on we focus on \KMP for the presentation.
Recall that if we reach the state $u$ after reading the first $j$ letters of $\Text$ in $\A_\Pattern$, and hence in $\KTX$, then at the next iteration of the main \texttt{while} loop, index~$i$ contains the value $|u|$. For $u\in Q_\Pattern$ and $j\in\{0,\ldots,n-1\}$, we are thus interested in the probability $p_\Pattern(j,u)$ that after reading $\pref(\Text,j)$ in $\KTX$ we end in a state $u$. 
Slightly abusing notation, we write $\pi(u)=\pi_{|u|}(u)$. 
For any $u\in Q_X$ let $\border{u}$ denote the longest strict border of $u$, with the convention that $\border{\varepsilon}=\bot$.

\begin{restatable}{lemma}{stationary}\label{lm:stationary probability}
For any $u\in Q_\Pattern$ and any $j\geq m$, $p_\Pattern(j,u)$ does not depend on $j$ and we have
$p_\Pattern(j,u) = p_X(u)$ with $ p_X(u):=\pi(u) - \sum_{\substack{v\in Q_X\\\border{v}=u}}\pi(v)$.
\end{restatable}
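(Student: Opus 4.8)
The plan is to establish the formula by identifying $p_X(u)$ as the stationary distribution of a Markov chain on the state set $Q_\Pattern$ and then giving an explicit combinatorial description of that stationary probability. First I would set up the Markov chain: reading the text $\Text$ letter by letter in the automaton $\A_\Pattern$ (equivalently in $\KTX$) induces a random walk on $Q_\Pattern$, where from state $u$ the next letter $\anyletter$ is drawn according to $\pi$ and we move to $\delta_\Pattern(u,\anyletter)$. Because $\A_\Pattern$ is deterministic and complete and all letter probabilities are strictly positive, this chain is irreducible and aperiodic (the state $\varepsilon$ has a self-loop whenever some letter does not extend the empty prefix, and more generally one checks the chain is ergodic), so it has a unique stationary distribution to which $p_\Pattern(j,\cdot)$ converges. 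The claim that $p_\Pattern(j,u)$ is already stationary for $j\ge m$ is the part needing care, and I address it below.

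The key algebraic idea is to guess and verify the stationary vector. I would observe that $\pi(u)=\pi_{|u|}(u)$ is the probability that the text, read as an independent stream, has $u$ as a prefix starting at the current position; intuitively $\pi(u)$ counts the probability of ``being at a prefix compatible with $u$,'' and the correction term subtracts the probability mass that has effectively progressed beyond $u$ into a longer prefix $v$ having $u$ as its longest border. Concretely, I would verify that the proposed vector $p_X(u)=\pi(u)-\sum_{v\colon\border{v}=u}\pi(v)$ satisfies the balance (invariance) equations of the chain: for each target state $w$, the total inflow $\sum_{u,\anyletter\colon \delta_\Pattern(u,\anyletter)=w}p_X(u)\,\pi_\anyletter$ must equal $p_X(w)$. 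The structure of $\A_\Pattern$ makes the incoming transitions to $w=u'\anyletter$ (a prefix) split cleanly into the ``forward'' transition extending the correct prefix and the ``failure/backward'' transitions, and the telescoping between the $\pi(u)$ terms and the border-correction sums should collapse the balance equation to an identity. A useful sanity check at this stage is $\sum_{u\in Q_X}p_X(u)=1$, which follows because every strict prefix $v$ with a defined longest border $u=\border{v}$ is counted once negatively and the positive terms $\pi(u)$ telescope.

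I would prove stationarity for all $j\ge m$ (not merely convergence as $j\to\infty$) by a direct exactness argument rather than by appealing to the stationary limit. The point is that $\pi(u)$ has a clean meaning in terms of the underlying i.i.d.\ text: after reading $j\ge m\ge|u|$ letters, the conditional probability that the current $\A_\Pattern$-state is $u$ can be computed exactly, because the automaton state after position $j$ depends only on the longest suffix of $\pref(\Text,j)$ lying in $Q_\Pattern$, and once $j\ge m$ the relevant window of text is long enough that no boundary effects from the start of $\Text$ remain. I would make this precise by writing the event ``state $=u$ after $j$ steps'' as ``the length-$|u|$ suffix equals $u$ and this suffix is not the suffix of any longer prefix $v\in Q_\Pattern\cup\{\Pattern\}$ ending here,'' which in the independent model gives exactly $\pi(u)$ minus the overcounted longer matches, i.e.\ the displayed formula; the condition $j\ge m$ guarantees every competing prefix $v$ (of length up to $m$) fits inside the already-read text, removing dependence on $j$.

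The main obstacle I expect is the second step: verifying the balance equations (or equivalently the exact suffix-counting identity) requires carefully matching each failure transition of $\KTX$ to the correct border relation and checking that the inclusion--exclusion over longer prefixes telescopes without leftover terms. In particular one must handle the accepting transition into $\delta_\Pattern(Y,\anyletter)$ and the treatment of $\Pattern$ itself (which is not a state of $\A_\Pattern$) consistently with the convention $\border{\varepsilon}=\bot$, so that the sums range over exactly the right set of $v$. Keeping the bookkeeping of ``which $v$ have $\border{v}=u$'' aligned with the automaton's backward edges is where the calculation is most delicate; once that correspondence is pinned down, the algebra is a routine telescoping.
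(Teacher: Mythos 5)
Your third paragraph is the right idea and is essentially the paper's actual proof: for $j\ge m$ the event ``the state after reading $\pref(\Text,j)$ is $u$'' is exactly ``$u$ is a suffix of $\pref(\Text,j)$ and no longer element of $Q_\Pattern$ is,'' and all probabilities involved are computed exactly in the i.i.d.\ model. But the step you defer as ``delicate bookkeeping'' is the entire content of the lemma, and it is not a routine telescoping. Writing the correction term as $\mathbb{P}\bigl(\exists v\in Q_\Pattern,\ |v|>|u|,\ v \text{ a suffix of }\pref(\Text,j)\bigr)$ (on the event that $u$ is a suffix, any such $v$ necessarily has $u$ as a strict border), one must justify that this probability of a union of heavily correlated events equals the plain sum $\sum_{v:\border{v}=u}\pi(v)$. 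The paper does this in two moves that are absent from your proposal: (i) the existential can be restricted to those $v$ whose \emph{longest} strict border is exactly $u$, because if $u$ is a non-maximal border of $v$ then $\border{v}$ is a shorter competitor that is a suffix of $\pref(\Text,j)$ whenever $v$ is, so dropping such $v$ does not change the union; and (ii) for distinct $v,v'$ with $\border{v}=\border{v'}=u$, the events ``$v$ is a suffix'' and ``$v'$ is a suffix'' are \emph{disjoint} (if both held, the shorter would be a border of the longer that is strictly longer than $u$), so the probability of the union is the sum of the individual probabilities with no inclusion--exclusion correction at all. Without (i) and (ii) the formula does not follow.

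Two further points. First, your competing set $Q_\Pattern\cup\{\Pattern\}$ is wrong: the state of $\A_\Pattern$ after reading $\pref(\Text,j)$ is the longest suffix that is a \emph{strict} prefix of $\Pattern$, so $\Pattern$ itself must be excluded from the competitors (otherwise you undercount $p_\Pattern(j,u)$ for $u=\border{\Pattern}$; consistently, the sum in the statement ranges over $v\in Q_X$ only). Second, the balance-equation route of your first two paragraphs cannot by itself prove the lemma as stated: even granting irreducibility and aperiodicity of the chain on $Q_\Pattern$ (which you assert but do not check) and a verified invariance equation (which you also do not carry out), stationarity only identifies the limiting distribution, whereas the lemma claims exact equality at every finite $j\ge m$. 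That machinery is therefore dispensable once the direct argument of your third paragraph is completed, and completing it requires precisely the two observations above.
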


From \cref{lm:stationary probability} we can easily estimate the expected number of comparisons for any fixed pattern $\Pattern$, when the length $n$ of $\Text$ tends to infinity. Indeed, except when $j<m$, the probability $p_\Pattern(j,u)$ does not depends on $j$. Moreover, if we are in state $u$, from the length of the outputs of $\KTX$ we can directly compute the expected number of comparisons during the next iteration of the main \texttt{while} loop. See \cref{fig:markov} for a graphical representation.

\begin{figure}[t]
	\centering\small
	\begin{tikzpicture}[scale=1]
		\node[draw,thick] (0) at (0,0) {$4/16$};
		\node[draw,thick] (1) at (2,0) {$6/16$};
		\node[draw,thick] (2) at (4,0) {$3/16$};
		\node[draw,thick] (3) at (6,0) {$2/16$};
		\node[draw,thick] (4) at (8,0) {$1/16$};
		
		\draw (0) edge[> = stealth',->, shorten >= .1em] node[above]{$\frac12:1$} (1);
		\draw (0) edge[> = stealth',loop above] node[above]{$\frac12:1$} (0);
		\draw (1) edge[> = stealth',->, shorten >= .1em] node[above]{$\frac12:1$} (2);
		\draw (1) edge[> = stealth',loop above] node[above]{$\frac12:2$} (1);
		\draw (2) edge[> = stealth', ->, shorten >= .1em] node[above,yshift=0.2mm]{$\frac12:1$} (3);
		\draw (2) edge[> = stealth', ->, bend left=35, shorten >= .1em] node[above]{$\frac12:1$} (0);
		\draw (3) edge[> = stealth', ->, bend right=37, shorten >= .1em] node[above,yshift=0.2mm,align=center]{$\color{red}\frac12:3\ (mp)$\\\color{blue}$\frac12:2\ (kmp)$} (1);
		\draw (3) edge[> = stealth', ->, shorten >= .1em] node[below]{$\frac12:1$} (4);
	    \draw (4) edge[> = stealth', ->, bend right, shorten >= .1em] node[above,yshift=0.2mm]{$\frac12:2$} (3);
		\draw (4) edge[> = stealth', ->, bend left=35, shorten >= .1em, very thick, green!50!black] node[above]{$\frac12: 1$} (0);
		
	\end{tikzpicture}
    \caption{A graphical representation for the computation of the expected number of comparisons for the uniform distribution on $\{a,b\}$: in $\MTX$ and $\KTX$ the state labels $u$ have been changed into their probabilities $p_\Pattern(u)$, the letters into their probabilities $1/2$, and the output into their lengths. For instance, the probability to use the transition $aba\xrightarrow{a}a$ is $\frac2{16}\cdot\frac12=\frac1{16}$ and it yields $3$ comparisons for \MP or 2 for \KMP. Hence its contribution to $C_\Pattern$ in \cref{pro:expected comparisons} is $\frac3{16}$ or $\frac1{8}$.\label{fig:markov}}
\end{figure}

\begin{restatable}{proposition}{procomparisons}\label{pro:expected comparisons}
As $n\rightarrow\infty$, the expected number of letter comparisons performed by Algorithm~\ref{algo:find} with \KMP (or \MP with $\MTX$) is asymptotically equivalent to 
$C_\Pattern\cdot n$, where
\[
C_\Pattern = \sum_{u\in Q_\Pattern} p_X(u)\sum_{a\in\alphabet}\pi(a)\cdot\left|\OutputTransducer_\KTX\left(u\xrightarrow{a}\right)\right|,
\text{ and }1\leq C_\Pattern\leq 2.
\]
\end{restatable}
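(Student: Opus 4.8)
The plan is to identify the number of letter comparisons with the length $\left|\OutputTransducer_\KTX(\Text)\right|$ of the output word produced by $\Text$ in $\KTX$, which is exactly the content of \cref{lm:transducer comparisons}, and then to compute its expectation position by position. Writing $u_j$ for the (random) state reached after reading $\pref(\Text,j)$, the fragment of output emitted while reading $\Text[j]$ is precisely $\OutputTransducer_\KTX\left(u_j\xrightarrow{\Text[j]}\right)$, so
\[
\left|\OutputTransducer_\KTX(\Text)\right| = \sum_{j=0}^{n-1}\left|\OutputTransducer_\KTX\left(u_j\xrightarrow{\Text[j]}\right)\right|.
\]
By linearity of expectation, the task reduces to evaluating $E\left[\left|\OutputTransducer_\KTX\left(u_j\xrightarrow{\Text[j]}\right)\right|\right]$ for each $j$.

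The key point is that $u_j$ is a deterministic function of the first letters $\Text[0]\cdots\Text[j-1]$, while $\Text[j]$ is drawn independently according to $\pi$. Hence the pair $(u_j,\Text[j])$ satisfies $P(u_j=u,\Text[j]=a)=p_\Pattern(j,u)\,\pi(a)$, and the expectation at step $j$ factorizes as $\sum_{u\in Q_\Pattern}p_\Pattern(j,u)\sum_{a\in\alphabet}\pi(a)\left|\OutputTransducer_\KTX\left(u\xrightarrow{a}\right)\right|$. By \cref{lm:stationary probability}, for every $j\geq m$ this double sum is independent of $j$ and equals exactly $C_\Pattern$. The finitely many boundary terms with $j<m$ are each at most $\max_{u,a}\left|\OutputTransducer_\KTX\left(u\xrightarrow{a}\right)\right|$, a constant depending only on the fixed pattern; their total contribution is $O(1)$. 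Summing over $j$ gives $E\left[\left|\OutputTransducer_\KTX(\Text)\right|\right]=(n-m)\,C_\Pattern+O(1)=C_\Pattern\,n+O(1)$, which, since $C_\Pattern>0$, is asymptotically equivalent to $C_\Pattern\,n$.

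It remains to bound $C_\Pattern$. Every transition of $\KTX$ emits at least one symbol (its output is $N$, or a nonempty block of $T$'s possibly followed by an $N$), so $\left|\OutputTransducer_\KTX\left(u\xrightarrow{a}\right)\right|\geq 1$ for all $u,a$ and therefore $C_\Pattern\geq\sum_{u\in Q_\Pattern}p_X(u)$. This last sum equals $1$: by \cref{lm:stationary probability} we have $p_X(u)=p_\Pattern(m,u)$, and $\sum_{u\in Q_\Pattern}p_\Pattern(m,u)=1$ because after reading any word the transducer sits in some state of $Q_\Pattern$ (equivalently, telescope $p_X(u)=\pi(u)-\sum_{\border{v}=u}\pi(v)$ using $\pi(\varepsilon)=1$ and $\border{v}\in Q_\Pattern$ for every $v\neq\varepsilon$). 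This gives $C_\Pattern\geq 1$. For the upper bound, the classical worst-case guarantee that Algorithm~\ref{algo:find} performs at most $2n-m$ letter comparisons on \emph{any} text holds in particular in expectation, so $C_\Pattern\,n+O(1)\leq 2n-m$; dividing by $n$ and letting $n\to\infty$ yields $C_\Pattern\leq 2$.

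The step I expect to require the most care is the factorization in the second paragraph: one must state precisely that $u_j$ is measurable with respect to $\Text[0]\cdots\Text[j-1]$ and hence independent of $\Text[j]$, and then keep clean track of the $O(1)$ boundary correction, since it is this control (rather than a mere order-of-magnitude estimate) that upgrades the conclusion to a genuine asymptotic equivalence. The normalization $\sum_{u}p_X(u)=1$ is routine but must be invoked explicitly to pin down the constant in the lower bound, and the upper bound is essentially free once the worst-case comparison count is available.
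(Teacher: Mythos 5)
Your proof is correct and follows essentially the same route as the paper's: decompose the total comparison count by iteration of the main loop via Lemma~\ref{lm:transducer comparisons}, condition on the current state and the next letter (the paper phrases this through the law of total probability and indicator functions rather than your independence/factorization argument, but the computation is identical), invoke Lemma~\ref{lm:stationary probability} for $j\geq m$, and obtain the bounds $1\leq C_\Pattern\leq 2$ from the facts that every letter is read at least once and that at most $2n-m$ comparisons occur in the worst case. Your explicit verification that $\sum_{u}p_X(u)=1$ is a small addition the paper leaves implicit, but it does not change the argument.
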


Observe that \cref{lm:stationary probability} can also be derived by transforming $\KTX$ into a Markov chain and computing its stationary distribution~\cite{LePeWe08}. However, \cref{lm:stationary probability} provides a more direct and simpler formula, which appears to have gone unnoticed in the literature. Markov chains will prove very useful in \cref{sec:mispredictions comparisons}.

\section{Expected number of mispredictions}\label{sec:mispredictions}

We now turn to our main objective: a theoretical analysis of the number of mispredictions for a fixed pattern $\Pattern$ and a random text $\Text$.  
The analysis depends on the type of predictor used (see \cref{sec:introduction}). In what follows, all results are stated for local 2-bit saturated counters, such as the one in \cref{fig:predictor}.  
Let $\xi$ denote its transition function extended to binary words. For example, $\xi(\sN, NNNTT) = \sT$.  
Additionally, let $\mu(\lambda, s)$ denote the number of mispredictions encountered when following the path in the predictor starting from state $\lambda \in \{\sNN, \sN, \sT, \sTT\}$ and labeled by $s \in \{N, T\}^*$.  
For instance, $\mu(\sN, NNNTT) = 2$.

As previously noted, Algorithm~\ref{algo:find} contains four branches in total: one at Line~\ref{line:find-main-while}, two at Line~\ref{line:find-nested-while}, and one at Line~\ref{line:find-counter-if}. Each of these branches is assigned a local predictor, and all have the potential to generate mispredictions.  
The mispredictions generated by the main \texttt{while} loop (i.e. Line~\ref{line:find-main-while}) are easily analyzed. Indeed, the test holds true for $n$ times and then becomes false. Hence, the sequence of taken/not taken outcomes for this branch is~$T^nN$. 
Therefore, starting from any state of the 2-bit saturated predictor, at most three mispredictions can occur. It is asymptotically negligible, as we will demonstrate that the other branches produce a linear number of mispredictions on average.

\subsection{Mispredictions of the counter update}\label{sec:mispred counter}

We analyze the expected number of mispredictions induced by the counter update at Line~\ref{line:find-counter-if}. The sequence $s$ of taken/not-taken outcomes for this \texttt{if} statement is defined by $s_j=T$ if and only if $\pref(\Text,j)$ ends with the pattern $\Pattern$, for all $j\in\{0,\ldots, n-1\}$. This is easy to analyze, especially when the pattern $\Pattern$ is not the repetition of a single letter.
Proposition~\ref{pro:mispred counter typical} establishes that, on average, there is approximately one misprediction for each occurrence of the pattern in the text.

\begin{restatable}{proposition}{countertypical}\label{pro:mispred counter typical}

If $\Pattern$ contains at least two distinct letters, then the expected number of mispredictions caused by the counter update is asymptotically equivalent to $\pi(\Pattern)\cdot n$.
\end{restatable}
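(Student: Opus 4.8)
The plan is to reduce the number of mispredictions of the \texttt{if} statement to the number of occurrences of $\Pattern$ in $\Text$, using the hypothesis only through a single structural property of the outcome sequence $s$. First I would record that sequence: $s_j=T$ exactly when $\pref(\Text,j)$ ends with $\Pattern$, i.e.\ when an occurrence of $\Pattern$ ends at position $j$, so the number of $T$'s in $s$ equals the number of occurrences of $\Pattern$ in $\Text$. The key observation is that, when $\Pattern$ has at least two distinct letters, $s$ contains \emph{no two consecutive $T$'s}. Indeed, $s_j=s_{j+1}=T$ would mean that $\Pattern$ occurs ending at both $j$ and $j+1$; the two occurrences overlap on $m-1$ positions, which forces $\Pattern[i+1]=\Pattern[i]$ for all $0\leq i\leq m-2$, hence $\Pattern=a^m$, contradicting the assumption. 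This is the only point where the hypothesis enters, and it is precisely what excludes the degenerate single-letter pattern, whose occurrences form arbitrarily long runs of $T$'s.

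Next I would analyze the $2$-bit predictor on a sequence with no two consecutive $T$'s, started in state $\sNN$. A one-line induction shows the predictor never leaves $\{\sNN,\sN\}$: from $\sNN$ a $T$ moves it to $\sN$, and since the structural property forces the next symbol to be $N$, it immediately returns to $\sNN$; it can therefore never reach a \textsc{Taken} state. Reading mispredictions off the predictor, every $N$ is then correctly predicted while every $T$ is a misprediction (a \textsc{Not Taken} state reading $T$), so $\mu(\sNN,s)$ equals the number of $T$'s in $s$, that is, the number of occurrences of $\Pattern$. Since the probability that $s_j=T$ is $\pi(\Pattern)$ for every $j\geq m-1$ and $s_j=N$ for $j<m-1$, linearity of expectation gives an expected number of occurrences equal to $(n-m+1)\,\pi(\Pattern)$, which is asymptotically equivalent to $\pi(\Pattern)\cdot n$. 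This settles the statement for the initial predictor state $\sNN$.

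Finally I would remove the dependence on the starting state $\lambda$, and I expect this to be the main obstacle. The clean dynamics above genuinely rely on starting at $\sNN$: from a \textsc{Taken} state, a near-periodic input such as $NTNT\cdots$ (which a period-two pattern can locally produce) would keep the predictor trapped in $\{\sT,\sTT\}$ and mispredict at \emph{every} step, so one cannot bound $|\mu(\lambda,s)-\mu(\sNN,s)|$ by a constant in the worst case. I would instead control this difference in expectation by a coalescence argument: running in parallel the predictor started at $\lambda$ and the one started at $\sNN$ on the same $s$, any three consecutive $N$'s send both to $\sNN$, after which the two runs agree forever and their misprediction counts stop diverging. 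Because occurrences are rare, such a run of $N$'s appears after $O(1)$ symbols on average, so the two counts differ by $O(1)$ in expectation; combined with the previous paragraph this yields an expected number of mispredictions asymptotically equivalent to $\pi(\Pattern)\cdot n$ for any initial state. The structural and predictor lemmas are routine; the delicate part is this transient analysis, where one must argue that an adversarial initial state interacting with almost-periodic occurrence patterns contributes only negligibly on average.
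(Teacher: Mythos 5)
Your proof is correct and follows essentially the same route as the paper's: the outcome sequence has no two consecutive $T$'s, so a predictor started at $\sNN$ stays in $\{\sNN,\sN\}$ and mispredicts exactly once per occurrence of $\Pattern$, whose expected count is $\sim\pi(\Pattern)\cdot n$. The only difference is in handling an arbitrary initial state: you couple the two predictor runs and bound the expected coalescence time (three consecutive $N$'s) by $\O(1)$, whereas the paper waits for three consecutive text letters differing from the last letter of $\Pattern$ and shows this happens within $\O(\log n)$ steps with probability $1-1/n$; both give the same negligible error term.
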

\begin{proof}[Proof (sketch)]
Since $\Pattern$ contains at least two distinct letters, it cannot be a suffix of both $\pref(\Text,j)$ and $\pref(\Text,j+1)$. 
Hence, the sequence $s$ is of the form $(N^+T)^*N^*$. 
This means that every step to the right in the local predictor (for every $T$ in sequence~$s$), which corresponds to a match, is followed by a step to the left, except possibly for the last step.
Thus, if the local predictor reaches state $\sNN$, it remains in $\sN$ or $\sNN$ forever. 
Having three consecutive positions in $\Text$ without an occurrence of $\Pattern$ is sufficient to reach state $\sNN$. This happens in fewer than $\O(\log n)$ iterations with high probability, and at this point there is exactly one misprediction each time the pattern is found. This concludes the proof, as the expected number of occurrences of $\Pattern$ in $\Text$ is asymptotically equivalent to $\pi(\Pattern)\cdot n$. 
\end{proof}

The analysis of the case $\Pattern=\alpha^m$, where $\Pattern$ consists of a repeated single letter, is more intricate. We first present the proof sketch for $\Pattern=\alpha\alpha$, which captures all the essential ideas.
Let $A'=\alphabet\setminus\{\alpha\}$
and write $\Text= \beta_1 \beta_2 \dots \beta_\ell \alpha^{x}$, where $\beta_i=\alpha^{k_i} \overline{\alpha}$ with $k_i \geq 0$ and $\overline{\alpha} \in A'$. 
Depending on the value of $k_i$, one can compute the sequence of taken/not taken outcomes induced by a factor $\alpha^{k_i}\overline\alpha$, which is either preceded by a letter $\overline\alpha$ or nothing: $\overline\alpha$ yields $N$, $\alpha\overline\alpha$ yields $NN$, $\alpha^2\overline\alpha$ yields $NTN$, and so on. Thus, more generally, $\overline\alpha$ yields $N$ and $\alpha^{k_i}\overline\alpha$ yields $NT^{k_i-1}N$ for $k_i\geq 1$. 
We then examine the state of the predictor and the number of mispredictions produced after each factor $\beta_i$ is read. For instance, if just before reading $\beta_i=\alpha^3\overline\alpha$ the predictor state is $\sN$, then the associated sequence $NTTN$ produces three mispredictions and the predictor ends in the same state $\sN$, which can be seen on the path $\sN\xrightarrow[\phantom{misp.}]{N} \sNN\xrightarrow[misp.]{T}\sN\xrightarrow[misp.]{T}\sT\xrightarrow[misp.]{N}\sN$. Since $\sTT$ cannot be reached except at the very beginning or at the very end, it has a negligible contribution to the expectation, and we can list all the relevant possibilities as follows:
\[
    \begin{array}{c|l|l|l|l|l|l|l|l|l|l}
       & \multicolumn{2}{c|}{k=0} &  \multicolumn{2}{c|}{k=1} & \multicolumn{2}{c|}{k=2}&  \multicolumn{2}{c|}{k=3}&  \multicolumn{2}{c}{k\geq 4}\\
        & \multicolumn{2}{c|}{N} & \multicolumn{2}{c|}{NN} & \multicolumn{2}{c|}{NTN} & \multicolumn{2}{c|}{NTTN} & \multicolumn{2}{c}{NT^{k-1}N}\\ 
       \hline
       \sNN & 
        \rightarrow\sNN & 0 \text{ misp.}&
        \rightarrow \sNN  & 0 \text{ misp.}&
        \rightarrow \sNN  & 1 \text{ misp.}&
        \rightarrow \sN  & 3 \text{ misp.}&
        \rightarrow \sT & 3 \text{ misp.}\\
       \sN &
        \rightarrow \sNN & 0 \text{ misp.}&
        \rightarrow \sNN  & 0 \text{ misp.}&
        \rightarrow \sNN  & 1 \text{ misp.}&
        \rightarrow \sN  & 3 \text{ misp.}&
        \rightarrow \sT & 3 \text{ misp.}\\
       \sT &
        \rightarrow \sN & 1 \text{ misp.}&
        \rightarrow \sNN  & 1 \text{ misp.}&
        \rightarrow \sN  & 3 \text{ misp.}&
        \rightarrow \sT  & 3 \text{ misp.}&
        \rightarrow \sT & 3 \text{ misp.}
    \end{array}
\]

In the table above, the states $\sNN$ and $\sN$ produce identical outcomes and can therefore be merged into a single state, denoted as $\tilde{\sN}$, for the analysis. The resulting transitions form a graph with two vertices, which is then converted into a Markov chain by incorporating the transition probabilities $\alpha^k \overline{\alpha}$, as illustrated in \cref{fig:aa}.

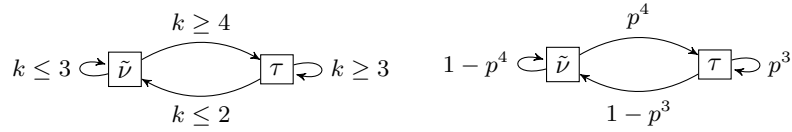
\begin{figure}[t]
    \centering
    \begin{tikzpicture}
        \node[draw] (N) at (0,0) {$\tilde\sN$};
        \node[draw] (T) at (2,0) {$\sT$};

        \draw (N) edge[> = stealth',loop left] node {\small $k\leq 3$} (N);
        \draw (T) edge[> = stealth',->,bend left] node[below] {\small $k\leq 2$} (N);
        \draw (T) edge[> = stealth',loop right] node {\small $k\geq 3$} (T);
        \draw (N) edge[> = stealth',->,bend left] node[above] {\small $k\geq 4$} (T);
    \end{tikzpicture}
    \quad
    \begin{tikzpicture}
        \node[draw] (N) at (0,0) {$\tilde\sN$};
        \node[draw] (T) at (2,0) {$\sT$};

        \draw (N) edge[> = stealth',loop left] node {\small $1-p^4$} (N);
        \draw (T) edge[> = stealth',->,bend left] node[below] {\small $1-p^3$} (N);
        \draw (T) edge[> = stealth',loop right] node {\small $p^3$} (T);
        \draw (N) edge[> = stealth',->,bend left] node[above] {\small $p^4$} (T);
    \end{tikzpicture}
    \caption{\label{fig:aa}%
        On the left, the transition system determined by the factor $\alpha^k \overline{\alpha}$; on the right, the corresponding Markov chain. For clarity, we denote $p := \pi(\alpha)$.
    }
\end{figure}

The stationary distribution $\pi_0$ of this Markov chain is straightforward to compute, yielding $\pi_0(\tilde\sN)= \frac{1-p^3}{1-p^3+p^4}$ and $\pi_0(\sT)=\frac{p^4}{1-p^3+p^4}$, where $p:=\pi(\alpha)$. From each state, the expected number of mispredictions can be computed using the transition table.  
For instance, starting from $\tilde\sN$, a misprediction occurs when $k=2$ with probability $(1-p)p^2$, and three mispredictions occur when $k\geq 3$ with probability $p^3$.
Therefore, the expected number of mispredictions when reading the next factor $\alpha^k\overline\alpha$ from $\tilde\sN$ is given by $(1-p)p^2 + 3p^3$. 
Finally, with high probability there are around $(1-p)n$ factors of the form $\anyletter^*\overline\anyletter$ in the decomposition of $\Text$, which corresponds to roughly the same number of steps in the Markov chain. The general statement for $\Pattern=\anyletter^m$ is as follows.

\begin{restatable}{proposition}{misspredam}\label{pro:mispred counter a^m}
If $\Pattern=\alpha^m$,  the expected number of mispredictions caused by the counter update is asymptotically $\kappa_m(\pi(\alpha))\cdot n$, with
$\kappa_m(p) = p^m(1 - p)(1 + p)^2$ for $m \geq 3$,
and 
\[
\kappa_1(p) = \frac{p(1-p)}{1-2p(1-p)},\quad\text{ and }\quad
\kappa_2(p) = \frac{p^2(1-p)\left(1+2p+p^2-p^3\right)}{1-p^3+p^4}.
\]
\end{restatable}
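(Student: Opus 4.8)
The plan is to split into three regimes according to $m$, reusing the machinery already set up for $\Pattern=\alpha\alpha$. In every case the outcome sequence $s$ of the counter‑update branch is governed by the trailing run of $\alpha$'s: writing $c_j$ for the number of consecutive $\alpha$'s ending at position $j$, we have $s_j=T$ iff $c_j\geq m$. First I would record the elementary fact that a maximal run $\alpha^k\overline\alpha$ (with $\overline\alpha\in\alphabet\setminus\{\alpha\}$) emits the block $N^{k+1}$ when $k<m$ and $N^{m-1}T^{k-m+1}N$ when $k\geq m$, and that under $\pi_n$ the run length $k$ is geometric with $P(k)=(1-p)p^k$, so a text of length $n$ contains $\sim(1-p)n$ runs.

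For $m=1$ the outcomes are simply i.i.d.\ Bernoulli: $s_j=T$ iff $\Text[j]=\alpha$, independently with probability $p$. Then the predictor is a four‑state Markov chain driven by this i.i.d.\ input, and I would solve its balance equations; the stationary weights come out proportional to $1,r,r^2,r^3$ with $r=p/(1-p)$. Summing the probabilities of the four misprediction edges and using $1+r+r^2+r^3=(1+r)(1+r^2)$, the rate collapses to $r/(1+r^2)$, which is exactly $p(1-p)/(1-2p(1-p))$; multiplying by the $n$ evaluations gives $\kappa_1(p)\cdot n$.

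For $m\geq 3$ the decisive observation is that the $T$-block of a run is preceded by $m-1\geq 2$ letters $N$, and since every run ends on an $N$ the predictor at a run‑start lies in $\{\sNN,\sN,\sT\}$; reading two $N$'s from any of these lands in $\sNN$. Hence every $T$-block is processed starting from $\sNN$, and its own contribution depends only on $k$: I would tabulate $1$ misprediction for $k=m$ and $3$ for $k\geq m+1$, giving a base expectation $\sum_k P(k)M(k)=p^m(1+2p)$ per run. The subtlety — which is precisely what produces the factor $(1+p)^2$ rather than $(1+2p)$ — is that after a long $T$-block ($k\geq m+2$) the trailing $N$ leaves the predictor in $\sT$, so the very first $N$ of the \emph{next} run is itself mispredicted. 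Because the arrival‑in‑$\sT$ event depends only on whether the \emph{previous} run had $k\geq m+2$, which is independent of the current run and occurs with probability $p^{m+2}$, this extra $\mathbf 1[s=\sT]$ contributes exactly $p^{m+2}$, and $p^m(1+2p)+p^{m+2}=p^m(1+p)^2$; multiplying by $(1-p)n$ yields $\kappa_m(p)\cdot n$. (Equivalently one may phrase the carry‑over through the Markov chain on run‑start states, which after merging $\sNN,\sN$ into $\tilde\sN$ becomes the trivial i.i.d.\ chain entering $\sT$ with probability $p^{m+2}$.) The case $m=2$ is exactly the one carried out before the statement: here only one $N$ precedes each $T$-block, the predictor is not reset, and one must retain the full two‑state chain with its transition table, whose stationary distribution $\bigl(\tfrac{1-p^3}{1-p^3+p^4},\tfrac{p^4}{1-p^3+p^4}\bigr)$ and per‑state expectations $E_{\tilde\sN}=p^2(1+2p)$ and $E_{\sT}=1+2p^2$ combine to the stated $\kappa_2(p)$.

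The main obstacle is not the algebra but making the asymptotics rigorous: the number of runs is random and the per‑run mispredictions are dependent, so I would phrase the whole process as a finite, irreducible, aperiodic Markov chain — either on run‑start predictor states (one step per run) or, more uniformly, on pairs (predictor state, truncated count $\min(c_j,m)$) reading one text letter per step — and invoke the ergodic / renewal‑reward theorem to identify the long‑run misprediction frequency with the stationary rate computed above. Within this I must verify that the transient before the chain mixes, the incomplete first and last runs, and the appearance of $\sTT$ only away from run‑starts each contribute $O(1)$ or $o(n)$ and hence vanish after dividing by $n$. The one genuinely error‑prone point is the bookkeeping of the boundary misprediction for $m\geq 3$; once that single $\mathbf 1[s=\sT]$ term is correctly included, everything else is routine.
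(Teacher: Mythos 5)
Your proposal is correct and follows essentially the same route as the paper: the run decomposition $\alpha^k\overline\alpha$ with its taken/not-taken blocks, the merge of $\sNN$ and $\sN$ into $\tilde\sN$, the birth-death chain for $m=1$, the two-state Markov chain for $m=2$, and for $m\geq 3$ the observation that the leading $N^{m-1}$ resets the predictor so that the only carry-over between runs is the extra misprediction when arriving in $\sT$ (probability $p^{m+2}$), yielding $p^m(1+2p)+p^{m+2}=p^m(1+p)^2$ per run. The only cosmetic difference is in how the asymptotics are made rigorous — you invoke ergodic/renewal-reward arguments where the paper couples the text with $\lceil(1-p)n\rceil$ i.i.d.\ runs and uses negative-binomial concentration — but both are standard and equivalent here.
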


\subsection{Expected number of mispredictions during letter comparisons}\label{sec:mispredictions comparisons}

In this section, we analyze the expected number of mispredictions caused by letter comparisons in \KMP (similar results can be derived for \MP).

According to \cref{lm:transducer comparisons}, the outcome of letter comparisons in \KMP is encoded by the transducer $\KTX$. More precisely, following a transition $u\xrightarrow{\anyletter:s}v$ in this transducer simulates a single  iteration of the main loop of Algorithm~\ref{algo:find}, starting with $i=|u|$ and processing the letter $\anyletter:=\Text[j]$. At the end of this iteration, $i=|v|$, and $s\in \{N,T\}^*$ is the sequence of taken/not-taken outcomes for the test $\Pattern[i]\neq \Text[j]$. 

The mispredictions occurring during this single iteration of the main loop depend on the predictor's initial state $\lambda$ and the sequence~$s$ which is computed using $\KTX$. The number of mispredictions~$\mu(\lambda, s)$ is retrieved by following the path starting from state $\lambda$ and labeled by~$s$ in the predictor, corresponding to the transition $\xi(\lambda, s)$. 
This is formalized by using a coupling of $\KTX$ with the predictor in \cref{fig:predictor}, forming a product transducer $\PKX$, defined as follows (see \cref{fig:PX} for an example):
\begin{itemize}
    \item the set of states is $Q_\Pattern\times \{\sNN,\sN,\sT,\sTT\}$, 
    \item there is a transition $(u,\lambda) \xrightarrow{\anyletter:\mu(\lambda,s)}(\delta_\Pattern(u),\xi(\lambda,s))$ for every state $(u,\lambda)$ and every letter~$\anyletter$, where $s$
     is the output of the transition $u\xrightarrow{\anyletter:s}\delta_\Pattern(u)$ in $\KTX$.
\end{itemize}
By construction, at the beginning of an iteration of the main loop in Algorithm~\ref{algo:find}, if~$i=|u|$, $\lambda$ is the initial state of the 2-bit saturated predictor, and $\alpha=\Text[j]$,  
then, during the next iteration, $\mu(\lambda,s)$ mispredictions occur, and the predictor terminates in state $\xi(\lambda,s)$,
where $u\xrightarrow{\anyletter:s}\delta_\Pattern(u)$ in $\KTX$. 
This leads to the following statement.

\begin{lemma}\label{lm:P_X}
The number of mispredictions caused by letter comparisons in \KMP, when applied to the text $\Text$ and the pattern $\Pattern$, is given by the sum of the outputs along the path that starts at $(\varepsilon,\lambda_0)$ and is labeled by $\Text$ in $\PKX$, where $\lambda_0\in\{\sNN,\sN,\sT,\sTT\}$ is the initial state of the local predictor associated with the letter comparison.
\end{lemma}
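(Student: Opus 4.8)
The plan is to reduce the statement to two elementary properties of the local predictor — that the misprediction count $\mu$ is \emph{additive} along a path and that the predictor state $\xi$ composes — and then to check by a short induction that the product transducer $\PKX$ bookkeeps exactly these quantities. First I would fix notation for the decomposition of the global comparison sequence. Writing $\Text=\Text_0\cdots\Text_{n-1}$, set $u_0:=\varepsilon$ and $u_j:=\delta_\Pattern(u_{j-1},\Text_{j-1})$ for the successive states visited in $\A_\Pattern$ (equivalently in $\KTX$), and let $s_j$ be the output of the transition $u_{j-1}\xrightarrow{\Text_{j-1}:s_j}u_j$ in $\KTX$. Since the output of a word in a deterministic transducer is the concatenation of the outputs of its transitions, the full output of $\Text$ in $\KTX$ is $S:=s_1 s_2\cdots s_n$, and by \cref{lm:transducer comparisons} this word $S$ is exactly the sequence of outcomes of the comparisons $\Pattern[i]\neq\Text[j]$ performed by \textsc{Find}. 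Consequently, the total number of mispredictions caused by these comparisons is, by definition of $\mu$, equal to $\mu(\lambda_0,S)$, the number of mispredictions encountered when the predictor, started in $\lambda_0$, reads the whole word $S$.

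The key lemma I would isolate is the additivity of $\mu$ together with the compositionality of $\xi$: for every state $\lambda$ and all words $s,s'\in\{N,T\}^*$,
\[
\xi(\lambda,s s')=\xi(\xi(\lambda,s),s')
\quad\text{and}\quad
\mu(\lambda,s s')=\mu(\lambda,s)+\mu(\xi(\lambda,s),s').
\]
Both follow immediately from the fact that the predictor of \cref{fig:predictor} is a deterministic automaton processing its input one symbol at a time with no look-ahead: the state after reading $ss'$ is obtained by running $s$ first and then $s'$ from the resulting state, and since each misprediction is attached to a single edge of the traversed path, the mispredictions along $ss'$ split into those along $s$ and those along its continuation. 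Iterating this identity over the blocks $s_1,\dots,s_n$ yields $\mu(\lambda_0,S)=\sum_{j=1}^{n}\mu(\lambda_{j-1},s_j)$, where we write $\lambda_j:=\xi(\lambda_{j-1},s_j)=\xi(\lambda_0,s_1\cdots s_j)$ for the predictor state after the first $j$ iterations.

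It then remains to identify this sum with the outputs accumulated along the path of $\PKX$. I would prove by induction on $j$ that the unique path of $\PKX$ starting at $(\varepsilon,\lambda_0)$ and labeled by $\pref(\Text,j)$ ends at the state $(u_j,\lambda_j)$ and has accumulated total output $\sum_{k=1}^{j}\mu(\lambda_{k-1},s_k)$. The base case $j=0$ is trivial. For the inductive step, the definition of $\PKX$ provides the transition $(u_{j-1},\lambda_{j-1})\xrightarrow{\Text_{j-1}:\mu(\lambda_{j-1},s_j)}(\delta_\Pattern(u_{j-1},\Text_{j-1}),\xi(\lambda_{j-1},s_j))=(u_j,\lambda_j)$, using that the sequence $s$ labelling this product transition is precisely the $\KTX$-output $s_j$ of $u_{j-1}\xrightarrow{\Text_{j-1}}u_j$; its output $\mu(\lambda_{j-1},s_j)$ is exactly the new term, which closes the induction. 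Taking $j=n$ shows that the total $\PKX$-output along the path labeled by $\Text$ equals $\sum_{j=1}^{n}\mu(\lambda_{j-1},s_j)=\mu(\lambda_0,S)$, which by the first paragraph is the total number of mispredictions, as claimed.

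I do not expect a genuine obstacle here: the content is simply that the product construction faithfully simulates ``run $\KTX$, feed its emitted $\{N,T\}$-stream into the predictor, and count mispredictions''. The only point requiring care is the additive decomposition of $\mu$ across the iteration boundaries $s_1,\dots,s_n$ — that cutting the comparison sequence $S$ at the main-loop boundaries neither creates nor destroys mispredictions. This holds precisely because the predictor carries its full state across iterations and $\PKX$ propagates that state in its second coordinate, so no misprediction information is lost at a block boundary.
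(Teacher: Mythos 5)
Your proposal is correct and follows exactly the argument the paper itself relies on: the paper justifies this lemma ``by construction'' in the paragraph preceding its statement (one product transition per iteration of the main loop, emitting $\mu(\lambda,s)$ and updating the predictor state to $\xi(\lambda,s)$), and your induction together with the additivity of $\mu$ and compositionality of $\xi$ is precisely the formalization of that observation. No gap; the two arguments coincide.
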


\begin{figure}[h]
    \centering
    \small
    \begin{tikzpicture}[xscale=1.2,yscale=0.9]      
        \node[draw,thick] (0N) at (0,3) {$\varepsilon,\sNN$};
        \node[draw,thick] (0n) at (0,2) {$\varepsilon,\sN$};
        \node[draw,thick] (0t) at (0,1) {$\varepsilon,\sT$};
        \node[draw,thick] (0T) at (0,0) {$\varepsilon,\sTT$};
        
        \node[draw,thick] (1N) at (2,3) {$a,\sNN$};
        \node[draw,thick] (1n) at (2,1.5) {$a,\sN$};
        \node[draw,thick] (1t) at (2,0) {$a,\sT$};
        
        \node[draw,thick] (2N) at (4,3) {$ab,\sNN$};
        \node[draw,thick] (2n) at (4,0) {$ab,\sN$};
        
        \node[draw,thick] (3N) at (6,3) {$aba,\sNN$};
        \node[draw,thick] (4N) at (8,3) {$abab,\sNN$};

        \draw (0N) edge[->,> = stealth'] node[left] {$b:1$} (0n);
        \draw (0n) edge[->,> = stealth'] node[left] {$b:1$} (0t);
        \draw (0t) edge[->,> = stealth'] node[left] {$b:0$} (0T);
        \draw (0T) edge[> = stealth',loop left] node[left] {$b:0$} (0T);

        \draw (0T) edge[> = stealth',->, shorten >= .1em] node[below] {$a:1$} (1t);
        \draw (0t) edge[> = stealth',->, shorten >= .1em] node[above] {$a:1$} (1n);
        \draw (0n) edge[> = stealth',->, shorten >= .1em] node[above] {$a:0$} (1N);
        \draw (0N) edge[> = stealth',->, shorten >= .1em] node[above] {$a:0$} (1N);

        \draw (1N) edge[> = stealth',loop above] node[above]{$a:1$} (1N);
        \draw (1n) edge[->,> = stealth',loop, in=-20,out=-60,min distance=5mm] node[right]{$a:2$} (1n);
        \draw (1t) edge[> = stealth',loop below] node[below]{$a:1$} (1t);

        \draw (1N) edge[> = stealth',->, shorten >= .1em] node [above, near end]{$b:0$} (2N);
        \draw (1n) edge[> = stealth',->, shorten >= .1em] node[below]{$b:0$} (2N);
        \draw (1t) edge[> = stealth',->, shorten >= .1em] node[below]{$b:1$} (2n);

        \draw (2N) edge[> = stealth',->, shorten >= .1em, bend left=12] node[above]{$b:1$} (0n);
        \draw (2n) edge[> = stealth',->, shorten >= .1em, bend right=12] node[below]{$b:1$} (0t);

        \draw (2N) edge[> = stealth',->, shorten >= .1em] node[below]{$a:0$} (3N);
        \draw (2n) edge[> = stealth',->, shorten >= .1em] node[left]{$a:0$} (3N);

        \draw[blue] (3N) edge[> = stealth',->, shorten >= .1em, bend right=30] node[above]{$a:1$ (kmp)} (1N);
        \draw[red] (3N) edge[> = stealth',->, shorten >= .1em, bend left=20] node[above, rotate=18]{$a:3$ (mp)} (1n);

        \draw ([yshift=.8mm]4N.west) edge[> = stealth',->, shorten >= .1em, bend right=12] node[above]{$a:1$} ([yshift=.8mm]3N.east);
        \draw ([yshift=-.8mm]3N.east) edge[> = stealth',->, shorten >= .1em, bend right=12] node[below]{$b:0$} ([yshift=-.8mm]4N.west);
        
        \draw (4N) edge[> = stealth',->, shorten >= .1em, bend right=40] node[above]{$b:0$} (0N);

    \end{tikzpicture}
    \caption{The strongly connected terminal component of $\PKX$ in black and blue, for $\Pattern=ababb$. In black and red, the variant for~$\PX$.\label{fig:PX}}
\end{figure}

We can then proceed as in Proposition~\ref{pro:mispred counter a^m}: the transducer $\PKX$ is converted into a Markov chain by assigning a weight of $\pi(\alpha)$ to the transitions labeled by a letter $\anyletter$. From this, we compute the stationary distribution $\pi_0$ over the set of states, allowing us to determine the asymptotic expected number of mispredictions per letter of $\Text$.
This quantity, $L_\Pattern$, satisfies
\begin{equation}\label{eq:mispred comparisons}
L_\Pattern = \sum_{u\in Q_\Pattern}\sum_{\lambda\in\{\sNN,\sN,\sT,\sTT\}}
\pi_0(u,\lambda) \times \sum_{\anyletter\in\alphabet} \pi(\anyletter)\cdot \OutputTransducer_{\PKX}((u,\lambda)\xrightarrow{\anyletter}).
\end{equation}

Observe that when processing a long sequence of letters different from $\Pattern[0]$, the letter comparisons produce a sequence of $T$'s, causing the 2-bit saturated predictor to settle in state $\sTT$ while $i=0$ in the algorithm. Consequently, the state $(\varepsilon,\sTT)$ is reachable from every other state. Hence, the Markov chain has a unique terminal strongly connected component (i.e. there are no transitions from any vertex in this strongly connected component to any vertex outside of it), which includes $(\varepsilon,\sTT)$ along with a self-loop at this state. Thus, our analysis focuses on this component, allowing us to apply classical results on primitive Markov chains~\cite{LePeWe08}, ultimately leading to Equation~(\ref{eq:mispred comparisons}). 
Notably, this result is independent of the predictor's initial state.
The computation of $L_\Pattern$ can be easily carried out using computer algebra, since computing the stationary probability reduces to inverting a matrix.

\begin{proposition}\label{pro:mispred comparisons}
The expected number of mispredictions caused by letter comparisons in \KMP on a random text of length $n$ and a pattern $\Pattern$, is asymptotically equivalent to $L_\Pattern\cdot n$.
\end{proposition}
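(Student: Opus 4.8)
The plan is to leverage the product transducer $\PKX$ and \cref{lm:P_X}, reducing the computation of the expected number of mispredictions to a stationary-distribution calculation on an associated Markov chain, exactly mirroring the approach used in \cref{pro:mispred counter a^m}. By \cref{lm:P_X}, the total number of mispredictions caused by letter comparisons is the sum of the outputs along the path labeled by $\Text$ starting at $(\varepsilon,\lambda_0)$ in $\PKX$. The key observation is that, since letters of $\Text$ are drawn independently according to $\pi$, weighting each transition $(u,\lambda)\xrightarrow{\anyletter}$ by $\pi(\anyletter)$ turns $\PKX$ into a Markov chain $\M$ on the finite state set $Q_\Pattern\times\{\sNN,\sN,\sT,\sTT\}$, and the expected total misprediction count is the expected sum of outputs accumulated over $n$ steps of $\M$ started from $(\varepsilon,\lambda_0)$.

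First I would establish the structure of the chain $\M$. As already observed in the discussion preceding the statement, reading a long run of letters distinct from $\Pattern[0]$ drives the algorithm to $i=0$ and saturates the predictor at $\sTT$, so $(\varepsilon,\sTT)$ is reachable from every state and carries a self-loop (reading any letter $\neq\Pattern[0]$ keeps us there with output $0$, and since $0<\pi(\anyletter)<1$ for all letters this self-loop has positive probability). This shows $\M$ has a unique terminal (absorbing) strongly connected component $\mathcal{C}$ containing $(\varepsilon,\sTT)$, and the self-loop makes $\mathcal{C}$ aperiodic, hence primitive. Next I would invoke the standard ergodic theorem for finite primitive Markov chains~\cite{LePeWe08}: regardless of the starting state, the occupation measure of $\M$ converges to the unique stationary distribution $\pi_0$ supported on $\mathcal{C}$, and the contribution of the finite transient prefix before entering $\mathcal{C}$ is $\O(1)$ in expectation (it is almost surely absorbed in a geometrically bounded number of steps, whose expected accumulated output is a bounded constant independent of $n$).

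The core quantitative step is then an application of the ergodic average for additive functionals. Define the per-state expected output $g(u,\lambda):=\sum_{\anyletter\in\alphabet}\pi(\anyletter)\cdot\OutputTransducer_{\PKX}((u,\lambda)\xrightarrow{\anyletter})$. The expected sum of outputs over $n$ steps is $\sum_{j=0}^{n-1}\mathbb{E}[g(Z_j)]$, where $(Z_j)$ is the chain trajectory. By the convergence of $\mathbb{E}[g(Z_j)]$ to $\sum_{(u,\lambda)\in\mathcal{C}}\pi_0(u,\lambda)\,g(u,\lambda)=L_\Pattern$ as $j\to\infty$ (at a geometric rate, by primitivity), the Cesàro average $\frac1n\sum_{j=0}^{n-1}\mathbb{E}[g(Z_j)]$ converges to the same limit $L_\Pattern$, which gives the asymptotic equivalence $L_\Pattern\cdot n$ and recovers Equation~(\ref{eq:mispred comparisons}). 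Independence from the initial predictor state $\lambda_0$ follows because $\pi_0$ is unique and supported on $\mathcal{C}$.

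The main obstacle I anticipate is the careful justification that the transient phase contributes only $\O(1)$ and does not distort the linear leading term. Concretely, one must argue that the expected number of steps to reach $\mathcal{C}$ is finite and that the expected output accumulated along the way is bounded uniformly in $n$; this rests on the finiteness of the outputs $\OutputTransducer_{\PKX}$ (each transition emits at most $m$ symbols, so $g$ is bounded) together with the geometric hitting-time bound for $\mathcal{C}$ guaranteed by reachability of $(\varepsilon,\sTT)$ from all states. A secondary subtlety is confirming aperiodicity so that the single stationary limit $L_\Pattern$ exists rather than a periodic cycle of Cesàro-equivalent averages; the self-loop at $(\varepsilon,\sTT)$ settles this immediately. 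Once these two points are in place, the remainder is the routine linear-algebra computation of $\pi_0$ by matrix inversion, which the statement explicitly delegates to computer algebra.
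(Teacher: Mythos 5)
Your proposal is correct and follows essentially the same route as the paper: convert $\PKX$ into a Markov chain by weighting transitions with $\pi(\anyletter)$, identify the unique terminal strongly connected component containing $(\varepsilon,\sTT)$ (whose self-loop gives primitivity), and apply the ergodic theorem to obtain the stationary average $L_\Pattern$, with the transient phase contributing only a negligible term. The paper leaves the transient and aperiodicity details implicit, so your added justification is a faithful elaboration rather than a different argument.
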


\subsection[Expected number of mispredictions of the first test of the inner while]{Expected number of mispredictions of the test $i\geq 0$}\label{sec:mispredictions i>=0} 

We conclude the analysis by examining the mispredictions caused by the test $i\geq 0$ at Line~\ref{line:find-nested-while} of Algorithm~\ref{algo:find}. To this end, we use the previously constructed transducer $\KTX$ (or equivalently $\MTX$, as the approach remains the same) to capture the behavior of this test through a straightforward transformation of the outputs.
Recall that a transition $u\xrightarrow{\anyletter:s}v$ in $\KTX$, with $s\in\{N,T\}^*$ indicates that when reading the letter $\anyletter$, the inner {\tt while} loop performs $|s|$ character comparisons, with the result encoded by the symbols of $s$. Due to the loop structure, $s$ always takes one of two forms:
\begin{itemize}
    \item $T^*N$ and the loop terminates because $\Pattern[i]=\Text[j]$ eventually, or
    \item $T^+$ and the loop terminates because $i=-1$ eventually.
\end{itemize}
In the first case, the condition $i\geq 0$ holds for $|s|$ iterations. In the second case, the condition holds for $|s|$ iterations before failing once. 
Thus, we define the transducer $\WKTX$ identically to $\KTX$, except for its output function:
\begin{equation}\label{eq:i>=0}
\OutputTransducer_{\WKTX}\left(u\xrightarrow{\anyletter}\right) =
    \begin{cases}
        T^{|s|}& \text{if }s=\OutputTransducer_{\KTX}\left(u\xrightarrow{\anyletter}\right) \in T^*N,\\
        T^{|s|}N& \text{if }s=\OutputTransducer_{\KTX}\left(u\xrightarrow{\anyletter}\right) \in T^+.
    \end{cases}
\end{equation}
The same transformation can be applied to $\MTX$ for \MP. At this stage, we could directly reuse the framework from Section~\ref{sec:mispredictions comparisons} to compute the asymptotic expected number of mispredictions for any given pattern $\Pattern$. However, a shortcut allows for a simpler formulation while offering deeper insight into the mispredictions caused by the test $i\geq 0$.

Since each output is either $T^kN$ for some $k \geq 1$ or $T^k$, the local predictor state generally moves toward $\sTT$, except in the case of $TN$.  
In this latter case, the predictor either remains in the same state or transitions from $\sTT$ to $\sT$.  
Moreover, from any state $s$ of $\A_\Pattern$, there always exists a letter $\anyletter$ such that $s \xrightarrow{\alpha:T}$ in $\WMTX$ or $\WKTX$ (for instance, the transition that goes to the right or when the pattern is found).  
As a result, with high probability, the predictor reaches the state $\sTT$ in at most $\mathcal{O}(\log n)$ iterations of the main loop of Algorithm~\ref{algo:find}.  
Once in $\sTT$, the predictor remains confined to the states $\sT$ and $\sTT$ indefinitely.  
Thus, with high probability, except for a small number of initial steps, the predictor consistently predicts that the branch is taken.  
At this point, a misprediction occurs if and only if the output belongs to $T^*N$, which happens precisely when a non-accepting transition in $\WKTX$ leads to the state $\varepsilon$.  
Since $\WKTX$ and $\WMTX$ differ only in their output functions, this result holds for both \MP and \KMP, allowing us to work directly with $\A_\Pattern$.  
Applying \cref{lm:stationary probability}, we obtain the following statement.

\begin{restatable}{proposition}{inonnegative}\label{pro:mispred i>=0}
When Algorithms \MP or \KMP are applied to a random text $\Text$ of length~$n$ with a given pattern $\Pattern$, the expected number of mispredictions caused by the test $i \geq 0$ is equal to the expected number of times a transition ending in $\varepsilon$ is taken along the path labeled by $\Text$ in $\A_\Pattern$, up to an error term of $\O(\log n)$. 
As a result, the expected number of such mispredictions is asymptotically equivalent to $G_\Pattern\cdot n$, where
$G_\Pattern = \sum_{u\in Q_\Pattern}p_X(u) \sum_{\substack{u\xrightarrow{\anyletter}\varepsilon\\u\anyletter\neq X}} \pi(\anyletter)$.
\end{restatable}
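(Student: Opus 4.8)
The plan is to combine the structural description of the first-test output in Equation~(\ref{eq:i>=0}) with a short burn-in analysis of the 2-bit predictor, and then to reduce the count of mispredictions to a count of transitions into $\varepsilon$ in $\A_\Pattern$, where \cref{lm:stationary probability} applies directly. First I would record the two shapes the output of $\WKTX$ (or $\WMTX$) can take for a single iteration of the main loop: reading one letter from a state $u$ produces either $T^k$, when the inner loop exits on a match, or $T^kN$, when it exits because $i=-1$, with $k\geq 1$ in both cases. Two consequences drive the whole argument: every such block begins with a $T$, and every block contains at most one $N$, located at its end and present exactly for the iterations that reach $i=-1$.

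Next I would analyse the predictor. Because every block starts with a $T$, once the predictor reaches state $\sTT$ it stays confined to $\{\sT,\sTT\}$ forever: a trailing $N$ sends $\sTT\to\sT$, but the next block opens with a $T$ that returns it to $\sTT$. In this confined regime the prediction is always \emph{taken}, so every $T$ is correct and every $N$ is a misprediction; hence, from the first time $\sTT$ is reached, the number of first-test mispredictions equals the number of iterations that exit with $i=-1$.

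Then I would bound the burn-in. From any state $u$ of $\A_\Pattern$, the forward letter $\Pattern[|u|]$ yields a matching step with output $T$ and has probability at least $\pi_{\min}:=\min_{\anyletter\in\alphabet}\pi(\anyletter)>0$; applying this three times in a row (each time with the forward letter of the current state) produces $TTT$ with probability at least $\pi_{\min}^3$, which drives the predictor into $\sTT$ from any of its four states. Hence in any window of three iterations the predictor settles with probability at least $\pi_{\min}^3$, so a standard geometric tail bound shows that $\sTT$ is reached within $\O(\log n)$ iterations with high probability; the worst case to rule out is a long run avoiding three consecutive $T$'s, such as the pattern $TNTN\cdots$ produced by repeatedly reading a non-$\Pattern[0]$ letter from $\varepsilon$, whose probability of persisting for $\Omega(\log n)$ steps is $n^{-\Omega(1)}$. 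Since each iteration contributes at most three mispredictions before settling, and on the exceptional event the total count is still $\O(n)$ while that event has probability $n^{-\Omega(1)}$, the expected number of first-test mispredictions differs from the expected number of $i=-1$ exits by only $\O(\log n)$.

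Finally I would translate ``$i=-1$ exits'' into $\A_\Pattern$. Such an exit occurs exactly when the failure chain reaches $\bot$, that is, when $\delta_\Pattern(u,\anyletter)=\varepsilon$, which is precisely a transition $u\xrightarrow{\anyletter}\varepsilon$. The only transition into $\varepsilon$ that must be excluded is the accepting one with $u\anyletter=\Pattern$, which reaches $\varepsilon$ exactly when $\border{\Pattern}=\varepsilon$: being a match, its block has the form $T^{|s|}$ and creates no first-test misprediction; this is the role of the condition $u\anyletter\neq\Pattern$ in $G_\Pattern$. Applying \cref{lm:stationary probability}, the expected number of such transitions taken at a step $j\geq m$ equals $\sum_{u\in Q_\Pattern}p_X(u)\sum_{\substack{u\xrightarrow{\anyletter}\varepsilon\\u\anyletter\neq\Pattern}}\pi(\anyletter)$, and summing over $j$ — the transient $j<m$ contributing only $\O(1)$ since $m$ is fixed — yields the stated $G_\Pattern\cdot n$. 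The main obstacle is making the burn-in estimate fully rigorous: one must establish the uniform lower bound on reaching $\sTT$ and convert it into a high-probability $\O(\log n)$ bound that simultaneously controls the pre-settling mispredictions and excludes runs avoiding three consecutive $T$'s, since it is only after settling that the mispredictions track the $N$'s rather than the $T$'s.
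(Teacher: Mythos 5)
Your proposal is correct and follows essentially the same route as the paper: the block structure $T^k$ / $T^kN$ from Equation~(\ref{eq:i>=0}), confinement of the predictor to $\{\sT,\sTT\}$ once $\sTT$ is reached, an $\O(\log n)$ high-probability burn-in (the paper bounds the probability of fewer than three forward moves among the first $\O(\log n)$ blocks, you bound the probability of never seeing three consecutive matching letters --- both give the needed polynomially small tail), and the identification of post-settling mispredictions with non-accepting transitions into $\varepsilon$, concluded via \cref{lm:stationary probability}. The only cosmetic difference is that you sum the per-step probabilities directly where the paper invokes the Ergodic Theorem; the content is the same.
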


\section{Results for small patterns, discussion and perspectives}\label{sec:discussion}

\begin{table}[htbp!]
    \centering
    \renewcommand{\arraystretch}{1.75}
    \scalebox{0.96}{%
        \begin{tabular}{ccccc}
        \Pattern  & \texttt{i = m} & \texttt{i >= 0} & Algo. & \texttt{X[i] != T[j]}  \\
        \midrule
        \multirow{2}{*}{aa}  & \multirow{2}{*}{${\displaystyle\kappa_2(p)  }$} & \multirow{2}{*}{${\displaystyle 1 - p}$} & MP &  ${\displaystyle \frac{p (1-p)(1+2p)}{1-p^2+p^{3}}}$  \\
        &  &  & KMP & ${\displaystyle \frac{p (1 - p)}{1-2p + 2p^{2}}}$ \\
        \midrule
        ab & ${\displaystyle p (1 - p)}$ & ${\displaystyle (1-p)^2}$ & both &  ${\displaystyle \frac{p (3-7p+7p^2-2 p^{3})}{1-p+2p^2-p^{3}}}$ \\
        \midrule
        \multirow{2}{*}{aaa} & \multirow{2}{*}{${\displaystyle \kappa_3(p)
        }$} & \multirow{2}{*}{${\displaystyle 1 - p}$} & MP &  ${\displaystyle p (1-p)(1+p)^2}$ \\
        & & & KMP & ${\displaystyle \frac{p (1 - p)}{1-2p+2 p^{2}}}$ \\
        \midrule
        \multirow{2}{*}{aab}  &  \multirow{2}{*}{$p^{2} \left(1 - p\right)$} & \multirow{2}{*}{$(1 - p)^2(1+p)$} & MP & ${\displaystyle p (1+2p-p^2-8p^3+6p^4+5p^5-5p^6+p^7)}$ \\
        & & & KMP & ${\displaystyle \frac{p (1-2p^2 -p^3+5p^4-3p^5+p^6)}{1-2p+3p^2-2p^3 +p^{4}}}$\\
        \midrule
        \multirow{2}{*}{aba} &  \multirow{2}{*}{${\displaystyle p^{2} \left(1 - p\right)}$} & \multirow{2}{*}{${\displaystyle (1-p)^{2}}$} & MP & ${\displaystyle \frac{p (3-7p+8p^2-4p^3+p^{4})}{1-p+p^{2}}}$ \\
        & & & KMP & ${\displaystyle \frac{p (3-7p+7p^2-2 p^{3})}{1-p+2p^2-p^{3}}}$\\
        \midrule
        abb &  ${\displaystyle p (1-p)^{2}}$ & ${\displaystyle (1-p)^{3}}$ & both & ${\displaystyle p (4-13p+21p^2-16p^3+6p^4-p^5)}$ \\
        \bottomrule
        \end{tabular}
    }
    \vspace{.25cm}
    \caption{\label{fig:result k=2}%
        We analyze the asymptotic expected number of mispredictions per symbol in the text for each branch of Algorithm~\ref{algo:find}, considering $\Sigma = \{a, b\}$ and all normalized patterns of length~2 and 3. To improve readability, we introduce $p := \pi(a) = 1 - \pi(b)$. Notably, for the patterns $\Pattern = ab$ and $\Pattern = abb$, the failure functions of Algorithms \MP and \KMP are identical, making both variants of Algorithm \textsc{Find} behave the same in these cases. The functions $\kappa_2$ and $\kappa_3$ are defined in \cref{pro:mispred counter a^m}.
    }
\end{table}

We conducted a comprehensive study of local branch prediction for MP and KMP and provide 
algorithms that allows to quantify mispredictions for any alphabet size, any given pattern and any memoryless source for the input text (as for the examples given in \cref{fig:result k=2}).

Notably, the expressions for the number of mispredicted letter comparisons become increasingly complex as the pattern length grows and as the alphabet size increases. For instance, for the pattern $X = abab$, with $\pi_a := \pi(a)$ and $\pi_b := \pi(b)$, we obtain:
\[
L_{abab} = \frac{\pi_a (- \pi_a^{3} \pi_b + 2 \pi_a^{2} \pi_b^{3} + 4 \pi_a^{2} \pi_b^{2} + 3 \pi_a^{2} \pi_b + \pi_a^{2} - 5 \pi_a \pi_b^{2} - 4 \pi_a \pi_b - 2 \pi_a + 2 \pi_b + 1)}{(1-\pi_a)(\pi_a^{2} \pi_b^{2} + \pi_a^{2} \pi_b - \pi_a \pi_b - \pi_a + 1)}.\]

The results given in \cref{fig:result uniform} illustrate this for the uniform distribution, for small patterns and alphabets. In particular, the branch $i\geq 0$, which is poorly predicted by its local predictor, exhibits a very high number of mispredictions when $|\alphabet| = 4$, while the branch that comes from letter comparisons, $X[i] \neq T[j]$, experiences fewer mispredictions. This trend becomes more pronounced as the size of the alphabet increases: for
$\Pattern = abb$ and $|\alphabet| = 26$, the misprediction rate for the test $i\geq 0$ reaches~0.96, whereas for  $X[i] \neq T[j]$, it drops to 0.041.

Our work presents an initial theoretical exploration of pattern matching algorithms within computational models enhanced by local branch prediction. However, modern processors often employ hybrid prediction mechanisms that integrate both local and global predictors, with global predictors capturing correlations between branch outcomes across different execution contexts. A key direction for further research is to develop a theoretical model that incorporates both predictors, allowing for more precise measurement in real-world scenarios. Another important line of research is to account for enhanced probabilistic distributions for texts, as real-word texts are often badly modeled by memoryless sources. For instance, Markovian sources should be manageable within our model and could provide a more accurate framework for the analysis.

\begin{table}[htbp]
    \centering
    \renewcommand{\arraystretch}{1.1}
    \scalebox{.96}{%
    \begin{tabular}{ccccclccccl}
    & \multicolumn{5}{c}{$|\alphabet|=2$} &\multicolumn{5}{c}{$|\alphabet|=4$} \\
    \cmidrule(lr){2-6}
    \cmidrule(lr){7-11}
    \Pattern  & {\tt i=m} & {\tt i>=0} & algo & {\tt X[i]!=T[j]} & Total& {\tt i=m} & {\tt i>=0} & algo & {\tt X[i]!=T[j]} & Total\\
    \cmidrule(lr){2-6}
    \cmidrule(lr){7-11}
    \multirow{2}{*}{aa} & \multirow{2}{*}{0.283} & \multirow{2}{*}{0.5} & \MP & 0.571 & 1.353 & \multirow{2}{*}{0.073} & \multirow{2}{*}{0.75} & \MP & 0.295 & 1.117 \\
      & & & \KMP &0.5 & 1.283 & & & \KMP &0.3 & 1.123\\
    \cmidrule(lr){2-6}
    \cmidrule(lr){7-11}
    ab & 0.25 & 0.25 & both & 0.571 & 1.321 & 0.062 & 0.688 & both & 0.375 & 1.186\\ 
    \cmidrule(lr){2-6}
    \cmidrule(lr){7-11}
    \multirow{2}{*}{aaa} & \multirow{2}{*}{0.14} & \multirow{2}{*}{0.5} & \MP & 0.563 & 1.202 & \multirow{2}{*}{0.018} & \multirow{2}{*}{0.75} & \MP & 0.293 & 1.06 \\
      & & & \KMP &0.5 & 1.14 & & & kmp &0.3 & 1.068\\
    \cmidrule(lr){2-6}
    \cmidrule(lr){7-11}
    \multirow{2}{*}{aab} & \multirow{2}{*}{0.125} & \multirow{2}{*}{0.375} & \MP & 0.605 & 1.23 & \multirow{2}{*}{0.015} & \multirow{2}{*}{0.734} & \MP & 0.322 & 1.086 \\
      & & & \KMP &0.542 & 1.166 & & & \KMP &0.322 & 1.086\\
    \cmidrule(lr){2-6}
    \cmidrule(lr){7-11}
    \multirow{2}{*}{aba} & \multirow{2}{*}{0.125} & \multirow{2}{*}{0.25} & \MP & 0.708 & 1.083 & \multirow{2}{*}{0.015} & \multirow{2}{*}{0.688} & \MP & 0.367 & 1.068 \\
      & & & \KMP &0.571 & 0.946 & & & \KMP &0.375 & 1.076\\
    \cmidrule(lr){2-6}
    \cmidrule(lr){7-11}
    abb & 0.125 & 0.125 & both & 0.547 & 0.921 & 0.015 & 0.672 & both & 0.397 & 1.098\\ 
    \cmidrule(lr){2-6}
    \cmidrule(lr){7-11}
    \end{tabular}
    }
    \vspace{.25cm}
    \caption{\label{fig:result uniform}%
        Asymptotic expected number of mispredictions per input symbol in a random text~$\Text$, using Algorithm~\ref{algo:find}, assuming a uniform distribution over alphabets of size 2 and 4.  
    }
\end{table}

\newpage


\newpage

\bibliography{biblio}

\begin{thebibliography}{10}

\bibitem{AuNiPi2016}
Nicolas Auger, Cyril Nicaud, and Carine Pivoteau.
\newblock {Good predictions are worth a few comparisons}.
\newblock In {\em {STACS 2016}}, volume~47, pages 12:1--12:14, Orl{\'e}ans,
  France, February 2016.
\newblock URL: \url{https://hal.science/hal-01212840}, \href
  {https://doi.org/10.4230/LIPIcs.STACS.2016.12}
  {\path{doi:10.4230/LIPIcs.STACS.2016.12}}.

\bibitem{BrMo05}
Gerth~Stølting Brodal and Gabriel Moruz.
\newblock Tradeoffs {Between} {Branch} {Mispredictions} and {Comparisons} for
  {Sorting} {Algorithms}.
\newblock In {\em Algorithms and {Data} {Structures}}, volume 3608, pages
  385--395. Springer Berlin Heidelberg, Berlin, Heidelberg, 2005.

\bibitem{DBLP:books/daglib/0020103}
Maxime Crochemore, Christophe Hancart, and Thierry Lecroq.
\newblock {\em Algorithms on strings}.
\newblock Cambridge University Press, 2007.

\bibitem{DBLP:books/ox/CrochemoreR94}
Maxime Crochemore and Wojciech Rytter.
\newblock {\em Text Algorithms}.
\newblock Oxford University Press, 1994.
\newblock URL: \url{http://www-igm.univ-mlv.fr/\%7Emac/REC/B1.html}.

\bibitem{DBLP:books/cu/Gusfield1997}
Dan Gusfield.
\newblock {\em Algorithms on Strings, Trees, and Sequences - Computer Science
  and Computational Biology}.
\newblock Cambridge University Press, 1997.
\newblock URL: \url{https://doi.org/10.1017/cbo9780511574931}, \href
  {https://doi.org/10.1017/CBO9780511574931}
  {\path{doi:10.1017/CBO9780511574931}}.

\bibitem{HePa17}
John~L. Hennessy and David~A. Patterson.
\newblock {\em Computer Architecture, Sixth Edition: A Quantitative Approach}.
\newblock Morgan Kaufmann Publishers Inc., 6th edition, 2017.

\bibitem{knuth1977fast}
Donald~E Knuth, James~H Morris, Jr, and Vaughan~R Pratt.
\newblock Fast pattern matching in strings.
\newblock {\em SIAM journal on computing}, 6(2):323--350, 1977.

\bibitem{LePeWe08}
David~A. Levin, Yuval Peres, and Elizabeth~L. Wilmer.
\newblock {\em {Markov Chains and Mixing Times}}.
\newblock American Mathematical Society, 2008.
\newblock URL: \url{http://pages.uoregon.edu/dlevin/MARKOV/markovmixing.pdf}.

\bibitem{MaNeWi15}
Conrado Mart{\'{\i}}nez, Markus~E. Nebel, and Sebastian Wild.
\newblock Analysis of branch misses in quicksort.
\newblock In {\em Proceedings of the Twelfth Workshop on Analytic Algorithmics
  and Combinatorics, {ANALCO} 2015, San Diego, CA, USA, January 4, 2015}, pages
  114--128, 2015.
\newblock \href {https://doi.org/10.1137/1.9781611973761.11}
  {\path{doi:10.1137/1.9781611973761.11}}.

\bibitem{Mittal2018}
Sparsh Mittal.
\newblock A survey of techniques for dynamic branch prediction.
\newblock {\em Concurrency and Computation: Practice and Experience}, 31, 2018.
\newblock URL: \url{https://api.semanticscholar.org/CorpusID:4572006}.

\bibitem{MP}
James~H Morris, Jr and Vaughan~R Pratt.
\newblock {A Linear Pattern-Matching Algorithm}.
\newblock Technical report, University of California, Berkeley, CA, 01 1970.

\bibitem{Regnier89}
Mireille R{\'{e}}gnier.
\newblock Knuth-morris-pratt algorithm: An analysis.
\newblock In Antoni Kreczmar and Grazyna Mirkowska, editors, {\em Mathematical
  Foundations of Computer Science 1989, MFCS'89, Porabka-Kozubnik, Poland,
  August 28 - September 1, 1989, Proceedings}, volume 379 of {\em Lecture Notes
  in Computer Science}, pages 431--444. Springer, 1989.

\bibitem{regnier1998complexity}
Mireille R{\'e}gnier and Wojciech Szpankowski.
\newblock Complexity of sequential pattern matching algorithms.
\newblock In {\em International Workshop on Randomization and Approximation
  Techniques in Computer Science}, pages 187--199. Springer, 1998.

\end{thebibliography}

\newpage
\section{Appendix: proofs}

\stationary*

\begin{proof}
Recall that by construction, after reading $\pref(\Text,j)$ in $\KTX$ (or in $\A_\Pattern$, or in $\MTX$), we reach the state $u$ which is the longest suffix of $\pref(\Text,j)$ that is also a strict prefix of $\Pattern$. Thus, $p_\Pattern(j,u)$ is the probability that $u$ is a suffix of $\pref(\Text,j)$ and that no longer strict prefix of $\Pattern$ is also a suffix of $\pref(\Text,j)$. Let $B_X(u)\subset Q_\Pattern$ be the set of strict prefixes of $X$ that have $u$ has strict border:
\[
B_X(u) = \{v\in Q_X: u\text{ is a strict border of }v\}.
\]
We then have:
\[
p_X(j,u) = \pi(u) - \mathbb{P}\left(\exists v\in B_X(u),\ v \text{ suffix of }\pref(\Text,j)\right).
\]
Observe that if $v \in B_X(u)$ satisfies $u \neq \border{v}$,  
then setting $w := \border{v}$ implies that $w \in B_X(u)$ and 
\[
v\text{ suffix of }\pref(\Text,j) \Rightarrow
w\text{ suffix of }\pref(\Text,j).
\]
Hence 
$\mathbb{P}(v\text{ or }w\text{ is a suffix of }\pref(\Text,j)) = \mathbb{P}(w\text{ is a suffix of }\pref(\Text,j))$. Thus, we can restrict the existential quantifier:
\begin{align*}
\mathbb{P}\big(\exists v\in B_X(u),&\ v \text{ is a suffix of }\pref(\Text,j)\big) \\
& = \mathbb{P}\big(\exists v\in B_X(u),\ \border{v}=u\text{ and }v \text{ is a suffix of }\pref(\Text,j)\big)\\
& = \mathbb{P}\big(\exists v\in Q_X,\ \border{v}=u\text{ and }v \text{ is a suffix of }\pref(\Text,j)\big).
\end{align*}
Finally, let $v,v'\in B_X(u)$ satisfy $\border{v}=\border{v'}=u$. 
If $v\neq v'$ then  both $ v$ and $ v'$ cannot simultaneously be suffixes of $ \pref(\Text, j)$. 
Indeed, assume without loss of generality that $ |v| < |v'|$. Suppose for contradiction that both $ v$ and $ v'$ are suffixes of $ \pref(\Text,j)$.
Then, $v$ is a strict suffix of $v'$, but since both are prefixes of $X$, $v$ is also a prefix of $v'$.  
This implies that $v$ would be a border of $v'$ longer than $u$, contradicting $\border{v'} = u$.
Thus, the events are disjoint for all $v \in Q_X$ such that $\border{v} = u$, leading to:  
\begin{align*}
\mathbb{P}\big(\exists v\in Q_X,&\ \border{v}=u\text{ and }v \text{ is a suffix of }\pref(\Text,j)\big) \\
& = \sum_{\substack{v\in Q_X\\\border{v}=u}} \mathbb{P}(v\text{ is a suffix of }\pref(\Text,j)\big) \\
& = \sum_{\substack{v\in Q_X\\\border{v}=u}}\pi(v).
\end{align*}
This concludes the proof.
\end{proof}

\procomparisons*

\begin{proof}
The transition systems of both $\KTX$ and $\MTX$ are identical to $\A_\Pattern$, as they only involve adding outputs to $\A_\Pattern$. 

For any non-negative integer $j<n$, let $C_{\Pattern,j}$ denote the random variable representing the number of letter comparisons performed by Algorithm~\ref{algo:find} during the $j$-th iteration of the main loop. 
By Lemma~\ref{lm:transducer comparisons}, the number of comparisons is equal to the length of the output $\OutputTransducer_\KTX\left(u\xrightarrow{\alpha}\right)$ for this iteration. 
By the law of total probability, we can partition the possibilities depending on the value of the variable $i$ (which corresponds to the current state $u$ in $\KTX$) and the letter $\anyletter=T_j$. 
Let $\mathds{1}_{u\rightarrow{\anyletter},k}$ denote the indicator function defined by:
\[
\mathds{1}_{u\rightarrow{\anyletter},k} = \begin{cases}
1 & \text{if }\left|\OutputTransducer_\KTX\left(u\xrightarrow{\alpha}\right)\right| = k\\
0 & \text{otherwise.}
\end{cases}
\]

Using the notation $s_\Pattern(\varepsilon,j)$ to denote the state reached after $j$ steps in $\KTX$ (or $\A_\Pattern$), we obtain:
\begin{align*}
\mathbb{P}\left(C_{\Pattern,j} = k\right)
&= \sum_{u\in Q_\Pattern}\sum_{\anyletter\in\alphabet}
\mathds{1}_{u\rightarrow{\anyletter},k}\cdot \mathbb{P}(s_\Pattern(\varepsilon,j)=u)\cdot \mathbb{P}(\Text[j]=\anyletter) \\
& = \sum_{u\in Q_\Pattern}p_X(j,u)\sum_{\anyletter\in\alphabet}\pi(\anyletter)\cdot \mathds{1}_{u\rightarrow{\anyletter},k}.
\end{align*}
From this, we derive:
\[
\mathbb{E}[C_{\Pattern,j}] 
 = \sum_{k\geq 1} k\cdot\mathbb{P}\left(C_{\Pattern,j}=k\right)
 = \sum_{u\in Q_\Pattern}p_\Pattern(j,u)\sum_{\anyletter\in\alphabet}
\pi(\anyletter)\cdot \left|\OutputTransducer_\KTX\left(u\xrightarrow{\alpha}\right)\right|.
\]
Therefore, by Lemma~\ref{lm:stationary probability}, $\mathbb{E}[C_{\Pattern,j}]=C_\Pattern$ for $j\geq m$. 
By linearity of the expectation, the expected total number of comparisons is
$\sum_{j=0}^{n-1}\mathbb{E}[C_{\Pattern,j}]$. This sum is asymptotically equivalent to $\sum_{j=m}^{n-1}\mathbb{E}[C_{\Pattern,j}]$ as $n\rightarrow \infty$, hence to $(n-m)\cdot C_\Pattern$, and therefore to $C_\Pattern\cdot n$. 
Moreover, we have the bound $C_\Pattern\geq 1$, since each character of $\Text$ is read, and $C_\Pattern\leq 2$ as fewer than $2n$ comparisons are performed in both \MP and \KMP~\cite{DBLP:books/ox/CrochemoreR94}.
\end{proof}

\countertypical*

\begin{proof}
Since $\Pattern$ contains at least two distinct letters, it cannot be a suffix of both $\pref(\Text,j)$ and $\pref(\Text,j+1)$. 
Hence, the sequence $s$ of taken/not-taken outcomes for the test $\Pattern[i]\neq \Text[j]$ follows the pattern $(N^+T)^*N^*$. 
This implies that every step to the right in the local predictor (i.e. for every $T$ in sequence~$s$) is immediately followed by a step to the left, except possibly for the last step.
Thus, if the local predictor reaches the state $\sNN$, it remains in $\sN$ or $\sNN$ forever, always predicting that the branch is not taken. 

Let $\alpha$ be the last letter of $\Pattern$ and define $F(\Text)$ as the smallest index $j\geq 2$ in $\Text$ such that $\Text_{j-2}$, $\Text_{j-1}$ and $\Text_j$ are all different from $\anyletter$, with $F(\Text)=+\infty$ if no such index exists. 
Let $q=1-\pi(\anyletter)$. 
For $\ell=3\lceil \log_x n \rceil$, with $x=1/(1-q^3)$, we have:
\[
\mathbb{P}\left(F(\Text) \geq \ell\right)
\leq \mathbb{P}\left(\forall i\in\{1,\ldots,\ell/3\},\ \Text_{3i-3}\neq\anyletter\text{ or }\Text_{3i-2}\neq\anyletter\text{ or }\Text_{3i-1}\neq\anyletter\right). 
\]
By independence of letters in $\Text$, it follows that:
\[
\mathbb{P}\left(F(\Text) \geq \ell\right)
\leq (1-q^3)^{\ell/3}
\leq (1-q^3)^{\log_x n} = n^{\log_x(1-q^3)} = \frac1n.
\]
Let $M(\Text,\lambda)$ denote the number of mispredictions induced by the counter update when applying Algorithm~\ref{algo:find} to a random text~$\Text$ of length $n$, with the predictor starting in state $\lambda\in\{\sNN,\sN,\sT,\sTT\}$. 
For sufficiently large $n$, $\mathbb{P}(F(\Text)<\ell)>0$ and $\mathbb{P}(F(\Text)\geq\ell)>0$, leading to:
\[
\mathbb{E}[M(\Text,\lambda)] = \mathbb{E}[M(\Text,\lambda)\mid F(\Text)<\ell]\cdot \mathbb{P}(F(\Text)<\ell)+\mathbb{E}[M(\Text,\lambda)\mid F(\Text)\geq\ell]\cdot \mathbb{P}(F(\Text)\geq\ell).
\]

For the first term, in the worst case, each counter update (i.e., each iteration of the main loop) results in at most one misprediction, giving:
\[
\mathbb{E}[M(\Text,\lambda)\mid F(\Text)\geq\ell]\cdot \mathbb{P}(F(\Text)\geq\ell) \leq n\cdot \frac1n = 1.
\]

Let $\suff(w,k)$ denote the suffix of length $k\leq |w|$ of $|w|$. 
Additionally, let $\text{Occ}(\Text,\Pattern)$ denote the number of occurrences of $\Pattern$ in $\Text$.
If $\Text$ is such that $F(\Text)<\infty$, then for all $\lambda\in\{\sNN,\sN,\sT,\sTT\}$ we have:
\begin{equation}\label{eq:FW}
\left|M(\Text,\lambda)-M(\suff(\Text,n-F(\Text)),\sNN)\right| \leq F(\Text),
\end{equation}
 since this predictor produces at most one misprediction per letter of $W$.
 Moreover, as $\Text$ is generated by a memoryless source, 
 is memoryless, $\suff(\Text,n-F(\Text))$ is itself a random word of length
 $n-F(\Text)$ , produced by a memoryless source with the same probability distribution $\pi$ over $\alphabet$. 
 As mentioned above, when the predictor starts $\sNN$, exactly one misprediction occurs per occurrence of $\Pattern$:
 \[
 \mathbb{E}[M(\suff(\Text,n-F(\Text)),\sNN)]
 = \mathbb{E}[\text{Occ}(\suff(\Text,n-F(\Text)),\Pattern)
 = (n-F(\Text)) \pi(X).
 \]
 Therefore,  using Eq.~\eqref{eq:FW}, we obtain:
 \[
 \mathbb{E}[M(\Text,\lambda)\mid F(\Text)<\ell] = n\pi(\Pattern) +\O(\ell).
 \]
 This concludes the proof, as $\mathbb{P}(F(w) < \ell) = 1+o(1)$ and $\ell = \O(\log n)$.
\end{proof}

\misspredam*

\begin{proof}
We consider the different cases separately.

\smallskip\noindent
$\circ$ \underline{$X=\anyletter$:} for each letter $\Text[j]$ of $\Text$, the branch from Line~\ref{line:find-counter-if} is taken if and only if $\Text[j]=\anyletter$. 
Thus, the predictor of \cref{fig:predictor} can be directly modeled as a Markov chain, where transitions occur with probability $\pi(\anyletter)$ to the right and $1-\pi(\anyletter)$ to the left as illustrated in the following figure.

\begin{center}
    \begin{tikzpicture}
        \node[draw,circle] (NN) at (0,0) {$\sNN$};
        \node[draw,circle] (N) at (2,0) {$\sN$};
        \node[draw,circle] (T) at (4,0) {$\sT$};
        \node[draw,circle] (TT) at (6,0) {$\sTT$};

        \draw (NN) edge[> = stealth',loop left] node {\small $1-\pi(\anyletter)$} (NN);
        \draw (N) edge[> = stealth',->,bend left] node[below] {\small $1-\pi(\anyletter)$} (NN);
        \draw (T) edge[> = stealth',->,bend left, very thick] node[below] {\small $1-\pi(\anyletter)$} (N);
        \draw (TT) edge[> = stealth',->,bend left, very thick] node[below] {\small $1-\pi(\anyletter)$} (T);
       
        \draw (NN) edge[> = stealth',->,bend left, very thick] node[above] {\small $\pi(\anyletter)$} (N);
        \draw (N) edge[> = stealth',->,bend left, very thick] node[above] {\small $\pi(\anyletter)$} (T);
        \draw (T) edge[> = stealth',->,bend left] node[above] {\small $\pi(\anyletter)$} (TT);
        \draw (TT) edge[> = stealth',loop right] node {\small $\pi(\anyletter)$} (TT);
    \end{tikzpicture}
\end{center}
The stationary distribution $\pi_0$  of this Markov chain can be readily computed. 
From this, the expected number of mispredictions corresponds to the expected number of times a bold edge is taken over $n$ steps,  which is asymptotically equal to
\[
n\times\sum_{\lambda\in\{\sNN,\sN,\sT,\sTT\}} \pi_0(\lambda) \sum_{\substack{\lambda'\in\{\sNN,\sN,\sT,\sTT\}\\\lambda\rightarrow\lambda'\text{ mispr.}}} \mathbb{P}(\lambda\rightarrow\lambda').
\]
After straightforward computations, this quantity is equal to $\kappa_1(\pi(\anyletter))$. 
Observe that this result is identical to applying $n$ independent Bernoulli trials with parameter $\pi(\anyletter)$ to the local predictor.
This scenario has been extensively studied in the literature, for instance in~\cite{AuNiPi2016,MaNeWi15}.

\smallskip\noindent
$\circ$ \underline{$X=\anyletter\anyletter$:} the proof sketch in the article covers most of the details. 
If the predictor's initial state is $\sTT$, then after reading the first factor of the form $\anyletter^*\overline\anyletter$, it is no longer in state $\sTT$ anymore, since every such factor ends with a not-taken branch.
Moreover, it is exponentially unlikely that $\Text = \anyletter^n$. 
Viewing a random text $\Text$ as a sequence of factors $\anyletter^*\overline\anyletter$, we approximate it with a word $\widetilde\Text$ by independently drawing  $\ell:=\lceil (1-\pi(\anyletter))n\rceil$ such factors at random.
Recall that the negative binomial law $\text{NB}(k,p)$ models the number of failures before obtaining $k$ success in a sequence of independent Bernoulli trials with success probability $p$.
If a success corresponds to drawing $\overline\anyletter$, which occurs with probability $1-\pi(\anyletter)$, then $|\widetilde\Text|-\ell$ follows a negative binomial law with parameters $\ell$ and $1-\pi(\anyletter)$.
Using classical results:
\[
\mathbb{E}[\text{NB}(k,p)] = \frac{k(1-p)}{p}\text{ and }\mathbb{V}[\text{NB}(k,p)] = \frac{k(1-p)}{p^2}.
\]
Thus for fixed $p$, the distribution is concentrated around its expected value by Chebyshev's inequality. 
In our case
$\mathbb{E}[|\widetilde\Text|-\ell]= \frac{\ell\pi(\anyletter)}{1-\pi(\anyletter)}$, which gives:
\[
\mathbb{E}[|\widetilde\Text|]=\frac{\ell}{1-\pi(\anyletter)} = n + \O(1).
\]
Together with  concentration around the mean, this shows that $\widetilde\Text$ differs from a random word of length $n$ only by a sublinear number of letters. 
Since each letter contributes to at most one misprediction for this branch, the main asymptotic term can be computed directly on $\widetilde\Text$, concluding the proof for $X=\anyletter\anyletter$.

\smallskip\noindent
$\circ$ \underline{$X=\anyletter^m$ with $m\geq 3$:} as we just proved, we can analyze the word $\widetilde\Text$ obtained by generating $\ell$ factors $\anyletter^*\overline{\anyletter}$. 
Since reading such a factor always ends with a not-taken branch, the predictor can be in state $\sTT$ only at its initial configuration and at the very end. 
Thus, we focus on the three remaining states to compute the evolution of the predictor and the number of mispredictions for each factor $\anyletter^k\overline{\anyletter}$:
\[
    \begin{array}{c|l|l|l|l|l|l|l|l|l|l}
       & \multicolumn{2}{c|}{k=0} &  \multicolumn{2}{c|}{k=1,2, 3\text{ to }m\!-\!1}&  \multicolumn{2}{c|}{k=m}&  \multicolumn{2}{c|}{k= m+1}&  \multicolumn{2}{c}{k\geq m+2}\\
        & \multicolumn{2}{c|}{N} & \multicolumn{2}{c|}{N^{k+1}} & \multicolumn{2}{c|}{N^{m-1}TN} & \multicolumn{2}{c|}{N^{m-1}TTN} & \multicolumn{2}{c}{N^{m-1}T^{k-m+1}N}\\ 
       \hline
       \sNN & 
        \rightarrow\sNN & 0 \text{ m.}&
        \quad~\rightarrow \sNN  & 0 \text{ m.}&
        \rightarrow \sNN  & 1 \text{ m.}&
        \rightarrow \sN  & 3 \text{ m.}&
        \quad\rightarrow \sT & 3 \text{ m.}\\
       \sN &
        \rightarrow\sNN & 0 \text{ m.}&
        \quad~\rightarrow \sNN  & 0 \text{ m.}&
        \rightarrow \sNN  & 1 \text{ m.}&
        \rightarrow \sN  & 3 \text{ m.}&
        \quad\rightarrow \sT & 3 \text{ m.}\\
       \sT &
        \rightarrow\sN & 1 \text{ m.}&
        \quad~\rightarrow \sNN  & 1 \text{ m.}&
        \rightarrow \sNN  & 2 \text{ m.}&
        \rightarrow \sN  & 4 \text{ m.}&
        \quad\rightarrow \sT & 4 \text{ m.}\\
    \end{array}
\]

Observe that, as in the case $X=\anyletter\anyletter$, states $\sNN$ and $\sN$ exhibit identical behavior and can be merged into a single state  $\tilde\sN$ for analysis. 
After this reduction, the state $\sT$ has the same transitions but induces exactly one additional misprediction.
Moreover, the probability that the predictor is in state $\sT$ after reading a factor $\beta_i$ is the probability that $\beta_i$ contains at least  $m+2$ occurrences of $\anyletter$, which is $\pi(\anyletter)^{m+2}$. 
Hence the probability that the state is $\sNN$ or $\sN$ is $1-\pi(\anyletter)^{m+2}$. 
If the predictor is on $\sNN$ or~$\sN$, it produces 1 misprediction with probability $\pi(\anyletter)^m(1-\pi(\anyletter))$ and 3 mispredictions with probability $\pi(\anyletter)^{m+1}$. 
Thus the average number of mispredictions induced by the next factor, starting from one of these states is
\begin{align*}  
\mathbb{E}[\text{misp. next factor}\mid \text{state}=\tilde\sN]& = \pi(\anyletter)^m(1-\pi(\anyletter))+3 \pi(\anyletter)^{m+1}  = \pi(\anyletter)^m\left(1+2\pi(\anyletter)\right).
\end{align*}
And similarly,
\[
\mathbb{E}[\text{misp. next factor}\mid \text{state}=\sT]= 1+ \pi(\anyletter)^m\left(1+2\pi(\anyletter)\right).
\]

Hence, the expected number of mispredictions for $\widetilde\Text$ is asymptotically:
\[
\ell\times\left(\pi(\anyletter)^{m} (1+2\pi(\anyletter)) + \pi(\anyletter)^{m+2}\right) = \ell\times \pi(\anyletter)^{m} (1+\pi(\anyletter))^2\text{.}
\]
Since $\ell=\lceil (1-\pi(\anyletter)n\rceil$ is asymptotically equivalent to $(1-\pi(\anyletter))n$, this completes the proof.
\end{proof}

\inonnegative*

\begin{proof}
For the first part, concerning the error term, we follow the reasoning outlined in the main part of the article.
In every state, there exists at least one letter $\anyletter$ such that the outgoing transition in $\WKTX$ (or $\WMTX$) produces $T$.
Since other transitions cannot move the predictor toward $\sNN$ before reaching $\sTT$, with probability at least
$\pi_{\min} := \min_{\anyletter\in\alphabet}\pi(\anyletter)$
the predictor transitions toward $\sTT$; otherwise, it remains in the same state. 
The probability that fewer than three transitions toward $\sTT$ occur within the first $\ell:=\lceil\log_x n\rceil$ letters of $\Text$, where $x=(1-\pi)^{-1/2}$, , is upper bounded by
The probability that there are less than three moves toward $\sTT$ when reading the first $\ell:=\lceil\log_x n\rceil$ letters of $\Text$, with $x=(1-\pi)^{-1/2}$, is therefore upper bounded by 
(we upper bounded the probability of such moves by $1$ and the probability of staying in the same state by $1-\pi_{\min}$):
\[
\left(1-\pi_{\min}\right)^\ell + \ell\left(1-\pi_{\min}\right)^{\ell-1}
+ \frac{\ell(\ell-1)}2\left(1-\pi_{\min}\right)^{\ell-2}
= \O\left(\ell^2\left(1-\pi_{\min}\right)^{\ell}\right)
= \O\left(\frac{\log^2 n}{n^2}\right).
\]
Thus, the contribution of words $\Text$ for which $\sTT$ is not reached within the first $\lceil\log_x n\rceil$ letters is negligible.

The remainder is detailed in the main part of the article, with the formula for $G_X$ following as a consequence of the classical Ergodic Theorem~\cite{LePeWe08}, where the stationary distribution is given by~\cref{lm:stationary probability}.
\end{proof}

\newpage
\paragraph*{Additional figures}

In the first series of four figures below, we used the formulas obtained in the article to compute the expected number of mispredictions for each branch, as well as the total number of mispredictions. 
The results are represented using text symbols as $\pi(a)$ varies. 
The \MP algorithm is depicted with dashed lines. 
We ran our code for the patterns $aaaa$,$aaab$,$abab$, and$abbb$. Each plot is generated within a few seconds on a personal laptop.

\begin{figure}[h]
    \centering
    \includegraphics[scale=.4]{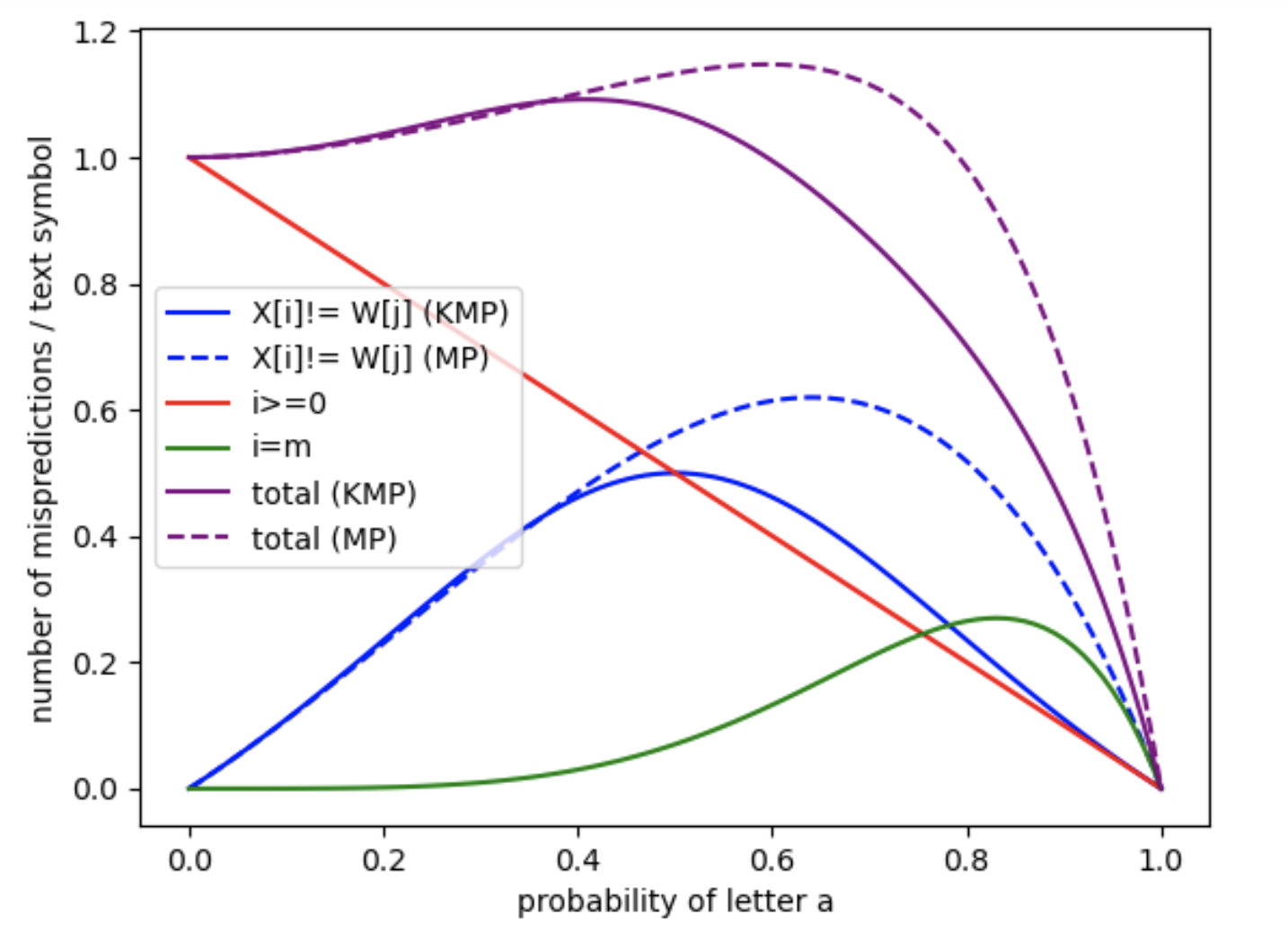}
    \caption{\label{fig:comparatif}%
        Pattern $X=aaaa$.
    }
\end{figure}

\begin{figure}[h]
    \centering
    \includegraphics[scale=.4]{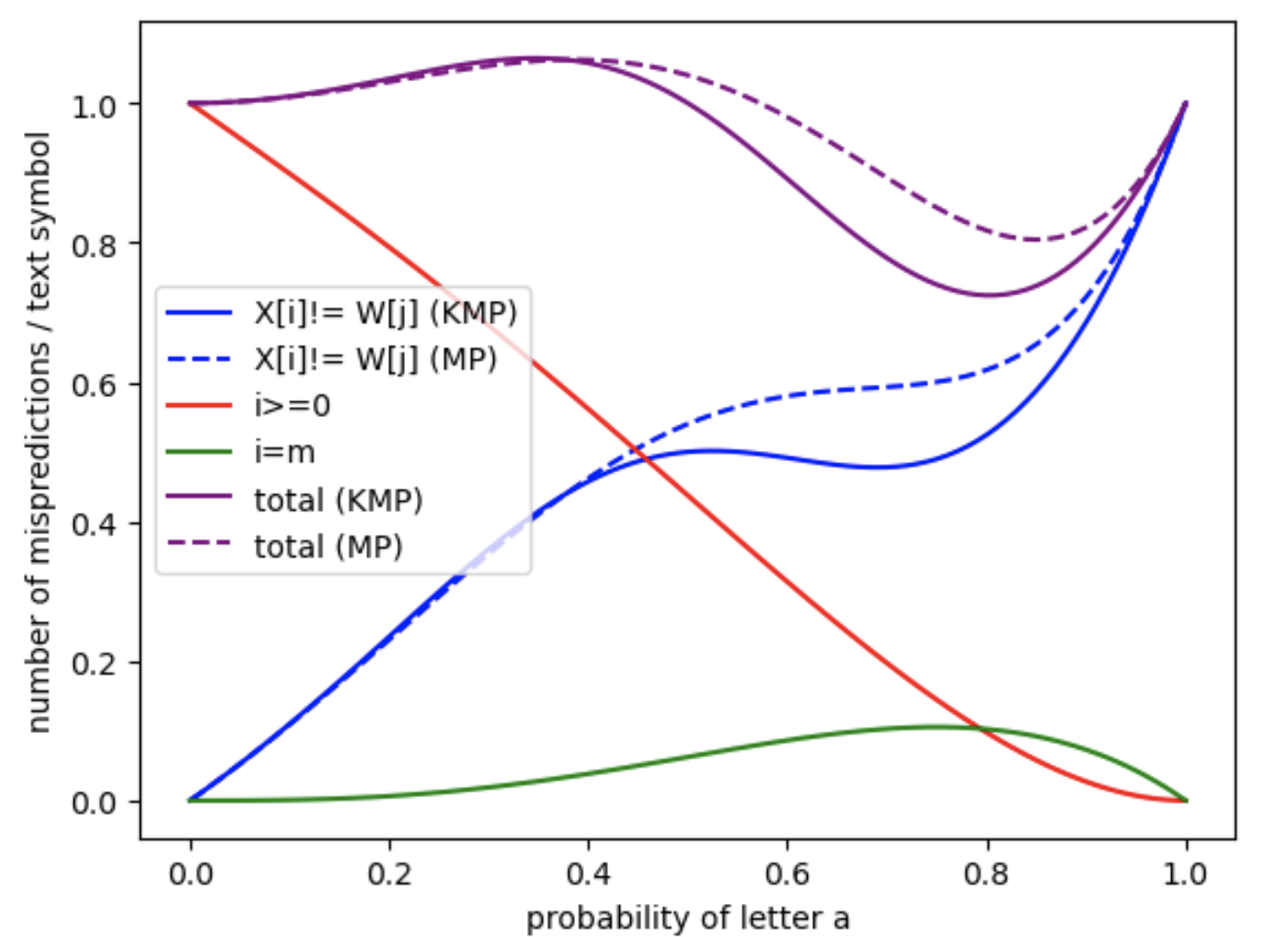}
    \caption{\label{fig:comparatif}%
        Pattern $X=aaab$.
    }
\end{figure}

\begin{figure}[h]
    \centering
    \includegraphics[scale=.4]{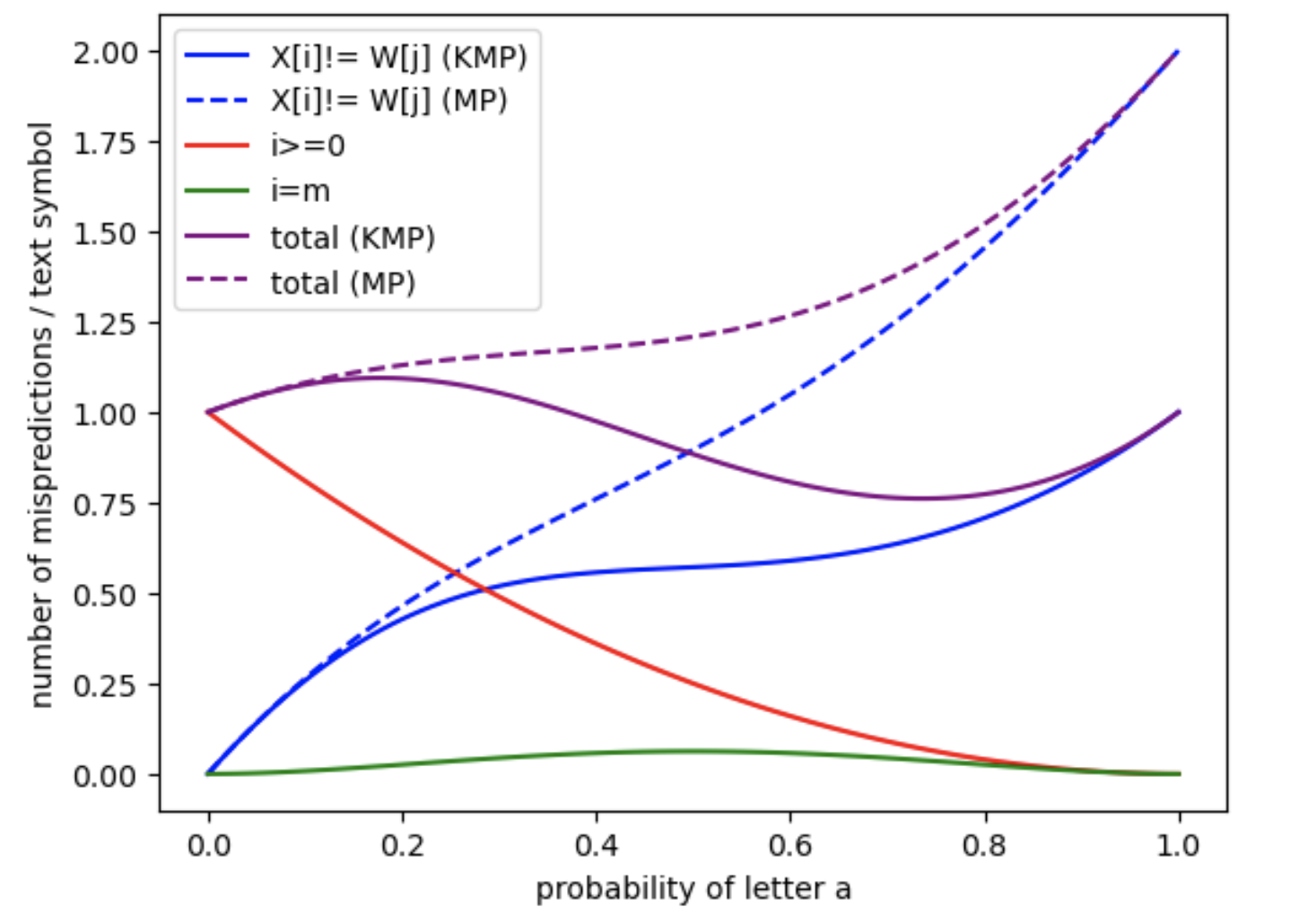}
    \caption{\label{fig:comparatif}%
        Pattern $X=abab$.
    }
\end{figure}

\begin{figure}[h]
    \centering
    \includegraphics[scale=.4]{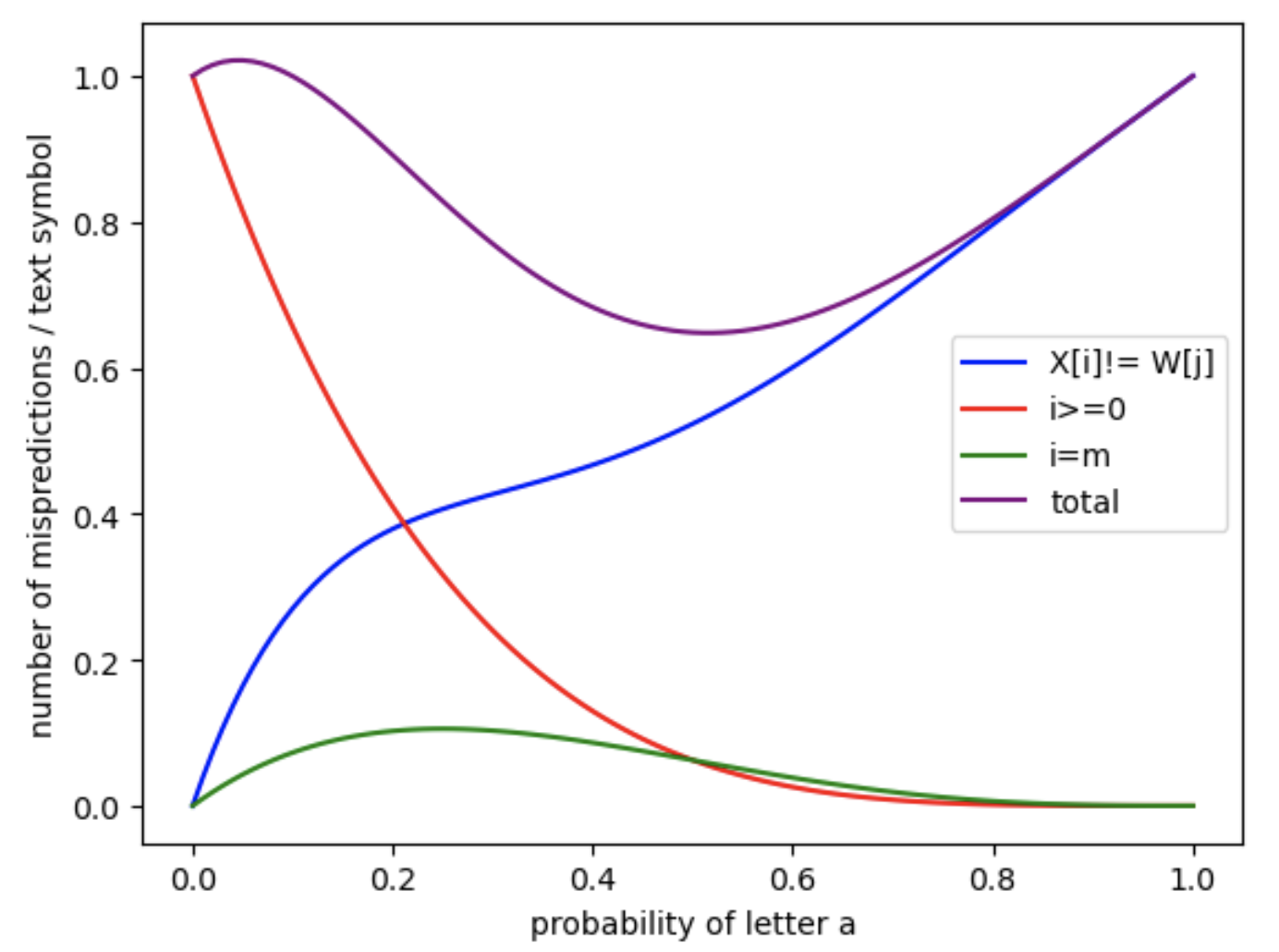}
    \caption{\label{fig:comparatif}%
        Pattern $X=abbb$. Both variant MP and KMP have the same border table.
    }
\end{figure}

\newpage

In the last two figures, we consider all  prefixes of length at least $2$ of $abababb$ and compute the variation of the expected total number of mispredictions per text symbol, for both the \MP and \KMP algorithms. 
The results suggest a form of convergence as the length of the prefix increases, which is expected since reaching the rightmost states of $\mathcal{A}_\Pattern$ becomes increasingly unlikely.

\begin{figure}[h]
    \centering
    \includegraphics[scale=.4]{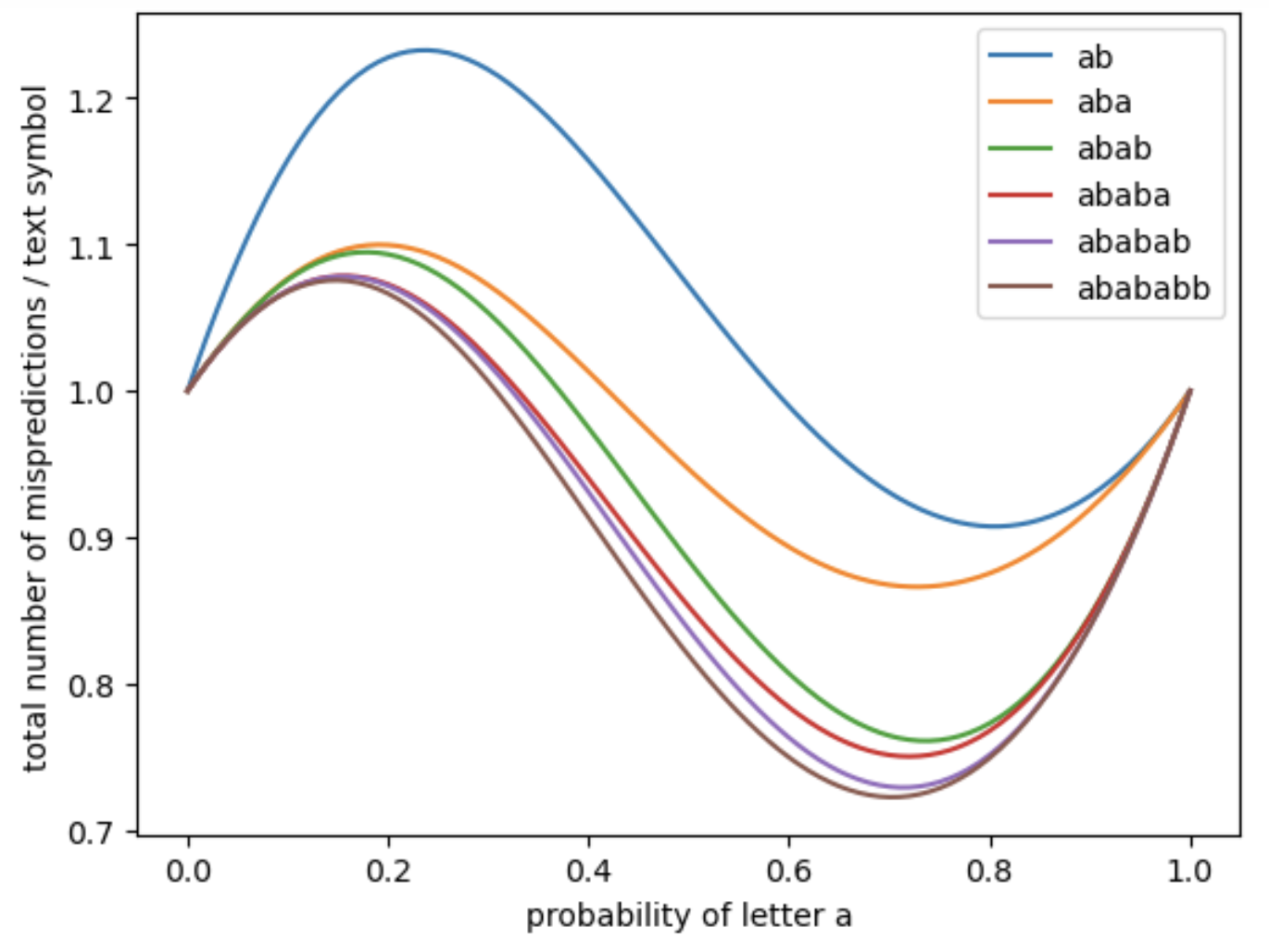}
    \caption{\label{fig:comparatif}%
        Total number of mispredictions for the prefixes of $abababb$ (KMP).
    }
\end{figure}

\begin{figure}[h]
    \centering
    \includegraphics[scale=.4]{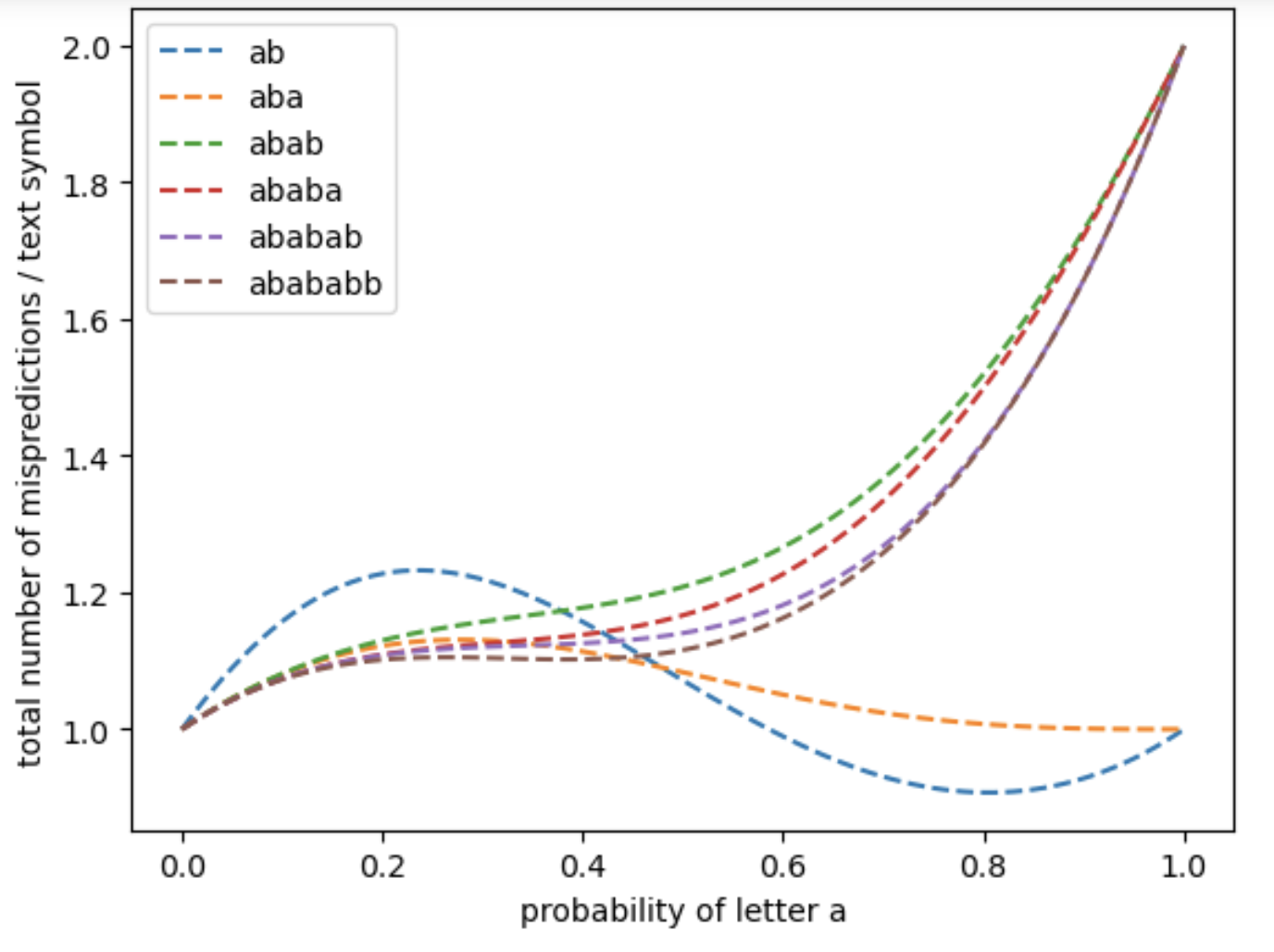}
    \caption{\label{fig:comparatif}%
        Total number of mispredictions for the prefixes of $abababb$ (MP).
    }
\end{figure}

\end{document}